\newcommandx{\alitodo}[2][1=]{\todo[linecolor=blue,backgroundcolor=blue!25,bordercolor=blue,#1]{#2}}
\theoremstyle{plain}
\newtheorem{theorem}{Theorem}
\newtheorem{corollary}{Corollary}
\newtheorem{lemma}{Lemma}
\newtheorem{proposition}{Proposition}
\newtheorem{definition}{Definition}
\newenvironment{customthm}[1]
  {\innercustomthm}
  {\endinnercustomthm}
\theoremstyle{definition}
\newtheorem{example}{Example}
\theoremstyle{remark}
\let\bar\overline
\begin{document}

\begin{titlepage}
\title{Optimal Procurement Design: \\ A Reduced-Form Approach}
\author{Kun Zhang\thanks{School of Economics, University of Queensland. Email: \href{mailto:kun@kunzhang.org}{
\texttt{kun@kunzhang.org}} \newline
I am indebted to Andreas Kleiner and Alejandro Manelli for their invaluable encouragement and guidance, and I wish to thank the coeditor, Alexander Wolitzky, and four anonymous referees for their excellent editorial comments. For helpful feedback and advice, I am grateful to Nageeb Ali, Hector Chade, Makoto Hanazono, Ravi Jagadeesan, Yixing Li, Simon Loertscher, Dong-Hyuk Kim, Ilia Krasikov, Benny Moldovanu, Juan Ortner, Edward Schlee, Ji Shen, Kunpeng Sun, Alexander Teytelboym, Mark Whitmeyer, Renkun Yang, Mengxi Zhang, and Yimeng Zhang, with special thanks to Claudio Mezzetti. I also thank the seminar audiences at ASU, Auburn, CUFE, UC Riverside, and Xiamen University, as well as the participants at AETW (Auckland). All errors are my own.}}
\date{March 8, 2026}

\maketitle

\bigskip

\begin{abstract}
	Standard procurement models assume that the buyer knows the quality of the good at the time of procurement; however, in many settings, the quality is learned only long after the transaction. We study procurement problems in which the buyer's valuation of the supplied good depends directly on its quality, which is unverifiable and unobservable to the buyer. For a broad class of procurement problems, we identify procurement mechanisms maximizing any weighted average of the buyer's expected payoff and social surplus. The optimal mechanism can be implemented via an auction that restricts sellers to submitting bids within specific intervals.
\end{abstract}

\thispagestyle{empty} 
\vspace{0.4in}

\end{titlepage}


\section{Introduction} \label{ito}
Procurement is essential for many different types of organizations. Unlike typical market purchases, in which goods are often standardized and readily available from many sellers, procurement typically involves acquiring custom goods or services from a limited number of potential suppliers. For example, hospitals rely on procurement to acquire medicines and medical devices \citep{blpt}, governments use it to obtain weapons and public services \citep{lsz}, and firms obtain most of their production inputs as well as many components of their product lines via procurement.

In such procurement contexts, the buyer's valuation of the supplied good or service depends directly on its quality. However, at the time the procurement contract is signed, the quality is often (at least partially) unverifiable in court and unobservable by the buyer. Indeed, although the buyer may have an initial estimate of quality, the true stability of a custom software system or the true reliability of a transportation system, for example, often remains uncertain until long after the contract is executed. As pointed out by \cite{mv95} and \cite{lpv}, in such contexts, buyers who run standard procurement auctions may harbor ``quality concerns''---that is, concerns that aggressively bidding sellers could be offering low-quality goods, potentially leading to undesirable outcomes.

In this paper, we ask the following question: What is the optimal design of the procurement mechanism when quality concerns are present? Under mild regularity conditions, a \textit{bid-restricted auction} (BRA) turns out to be optimal for the buyer. Like a second-price auction, a BRA induces an equilibrium in weakly dominant strategies. The primary difference is that in a BRA sellers are restricted to bidding within a collection of intervals; see \autoref{fig:bra_illu} for an illustration. The seller with the lowest bid wins the auction and supplies the good, and ties are broken uniformly at random. Because bids are only permitted within these intervals, making a bid that lies in a ``gap'' between intervals is infeasible. Accordingly, a seller who would like to submit such a bid ``rounds up'' to the smallest permissible bid above the gap (i.e., the lower endpoint of the next interval) so that the entire excluded range of bids collapses into a single bid. Whenever this bid is the lowest bid submitted, the tie-breaking rule applies, yielding a natural form of pooling (random allocation) over the gap. In most cases, the winning seller receives a payment equal to the second-lowest bid; however, in special circumstances, a ``payment reduction rule'' is applied, resulting in a payment lower than the second-lowest bid. This special rule helps eliminate incentives to bid into the adjacent lower interval to break a tie and thereby increase the probability of winning; as a result, the bid restrictions sustain the intended pooling over the gaps.

More generally, we consider the problem of maximizing any weighted average of the buyer's expected payoff and the social surplus, subject to the requirement that the buyer's expected payoff is non-negative. This requirement can be interpreted as the buyer's individual rationality constraint: the buyer can always decline to procure when her expected net value is negative. It can also be viewed as a reduced-form ``value-for-money'' screen used in procurement appraisal.\footnote{Although quality-related benefits are not literal budget items, many public-sector procurement appraisal frameworks explicitly allow for non-financial performance considerations and other ``unmonetized'' effects to enter value-for-money comparisons; see, for example, Australian Government, \href{https://www.finance.gov.au/sites/default/files/2025-10/Commonwealth-Procurement-Rules-2025.pdf}{\textit{Commonwealth Procurement Rules 2025}}; HM Treasury, \href{https://assets.publishing.service.gov.uk/media/62443d2c8fa8f5277b365ad7/Green_Book_supplementary_guidance_-_Value_for_Money.pdf}{\textit{Green Book Supplementary Guidance: Value for Money}}; and U.S. OMB, \href{https://www.whitehouse.gov/wp-content/uploads/2023/11/CircularA-94.pdf}{\textit{Circular A-94}}.}
In this more general setting, an optimal procurement mechanism is still a BRA, although a random reserve price might need to be introduced.

\begin{figure}
    \centering
    \begin{tikzpicture}[scale = 7]

\def\blevel{0};

\draw (0,0) -- (1,0) node[right] {};
\node[below] at (0, -0.6pt) {\(\underline{b}_1 = 0\)};
\node[below] at (0.4,-0.5pt) {\(\overline{b}_1\)};
\node[below] at (0.7,-0.7pt) {\(\underline{b}_2\)};
\node[below] at (1, -0.5pt) {\(\overline{b}_2 = 1\)};

\draw[DarkBlue, line width = 1.50mm] (0,\blevel) -- (0.4,\blevel);

\draw[DarkBlue, line width = 1.20mm, shift={(0,\blevel)}] (0pt,0.49pt) -- (0pt,-0.49pt);

\draw[DarkBlue, line width = 1.20mm, shift={(0.4,\blevel)}] (0pt,0.49pt) -- (0pt,-0.49pt);

\draw[DarkBlue, line width = 1.50mm] (0.7,\blevel) -- (1,\blevel);

\draw[DarkBlue, line width = 1.20mm, shift={(0.7,\blevel)}] (0pt,0.49pt) -- (0pt,-0.49pt);

\draw[DarkBlue, line width = 1.20mm, shift={(1,\blevel)}] (0pt,0.49pt) -- (0pt,-0.49pt);
    
\end{tikzpicture}
    \caption{Illustration of a BRA. Sellers who wish to participate in the auction must submit a bid \(b\) within the two bid intervals, \([\underline{b}_1, \overline{b}_1]\) and \([\underline{b}_2, \overline{b}_2]\). In other words, bids strictly between \(\overline{b}_1\) and \(\underline{b}_2\) are not allowed.}
    \label{fig:bra_illu}
\end{figure}

The key insight of this result is that the optimal BRA balances the tension between inducing price competition and alleviating quality concerns. It does so by fostering price competition within bid intervals, where quality concerns are mild; it simultaneously prohibits bids in the gaps between intervals, with the aim of dampening incentives for aggressive price competition in the range in which such competition is most problematic from a quality perspective. Restricting bids to suitably chosen intervals reconciles quality concerns with the simplicity of a second-price-like auction format. Put differently, while ``global'' quality concerns might necessitate abandoning competitive bidding altogether, ``local'' quality concerns can effectively be addressed by excluding only those bids that would otherwise fuel undesirable price competition.

We establish the optimality of BRA using a reduced-form approach. Specifically, we transform the procurement design problem into the problem of choosing an interim allocation rule, also known as a ``reduced form'' in the literature, that maximizes a linear functional identified by some virtual surplus. An interim allocation rule specifies the expected probability that the buyer procures a good from a particular seller based on her report of her private information. The interim allocation rule must satisfy a monotonicity constraint to ensure truth-telling and \citeauthor{border1991}'s (\citeyear{border1991}) condition to guarantee feasibility. We then apply techniques from linear optimization under a majorization constraint, developed by \cite{kms}, to solve for the optimal interim allocation. We complete the argument by verifying that a BRA induces the optimal interim allocation.

The study of procurement problems with unverifiable and unobservable quality was pioneered by \cite{mv95}.\footnote{See \cite{che2008} for a broader survey of the procurement literature; for a summary of recent papers on the performance of certain ad hoc mechanisms in the presence of quality concerns, see \citet[p.~1507]{lpv}.} They point out that in many cases, a standard procurement auction may perform poorly due to quality concerns; instead, it may be optimal to sequentially render take-it-or-leave-it offers to potential sellers. \cite{mv04} show that some ``hybrid mechanisms,'' which combine elements of sequential offers and auctions, can be optimal in specific procurement settings. In contrast to these studies, our work identifies procurement mechanisms that maximize any weighted average of the buyer's expected payoff and social surplus across a broad class of procurement problems.

\cite{lpv} show that when the buyer's virtual surplus is single-peaked, a mechanism called the lowball lottery auction (LoLA) is optimal. LoLA differs from a standard second-price procurement auction with a reserve price only in that sellers are not allowed to bid below a certain ``floor price.'' 
Our work generalizes \cite{lpv} by identifying the optimal procurement mechanisms in a broader class of environments;\footnote{This is not merely a technical curiosity; violations of their assumption may naturally arise due to the nature of the procurement setting (see \autoref{example:reli} for a concrete example).} indeed, when the buyer's virtual surplus is single-peaked, the optimal BRA is a LoLA. More importantly, our results reveal that what matters is not merely whether a floor price is used but whether sellers are disallowed from submitting bids in certain intervals in which quality concerns are severe. Furthermore, \cite{lpv} do not impose the restriction that the buyer's expected payoff must be non-negative and therefore do not offer insights into how this constraint should be addressed, an issue that can be relevant in practice.

In a BRA, the gaps between bid intervals can be interpreted as ``pooling regions.'' Other papers in the procurement literature also feature pooling regions as a part of their optimal mechanisms. \cite{burguet2012limited} and \cite{chillemi2014optimal} study optimal procurement mechanisms under the risk of default or contract breach. They show, among other findings, that pooling can be useful in reducing the probability of default. \cite{cck} study procurement design when sellers may collude without using side payments. They show that a collusion-proof mechanism requires pooling on intervals on which the concavified type distribution is affine.\footnote{In the context of college admission contests, \cite{krishna2026pareto} determine the intervals on which pooling introduces Pareto improvement relative to separation by concavifying the type distribution.} In this paper, by contrast, pooling regions are introduced to mitigate quality concerns, and they are identified by concavifying certain virtual surplus functions.

The remainder of this paper is organized as follows. \autoref{mpp} sets up and transforms the optimal procurement design problem. In \autoref{section:buyer-optimal} we study the special case of maximizing the buyer's expected payoff; \autoref{sop} discusses the general problem of maximizing any weighted average of the buyer's payoff and the social surplus, with the constraint that the buyer's payoff is bounded below by zero. \autoref{s:conclusion} concludes.

\section{The procurement problem} \label{mpp}
The model, which is essentially the same as that in \citet{mv95, mv04}, consists of one buyer and \(n > 1\) symmetric potential sellers. The buyer would like to procure one unit of a product from one of the potential sellers. Each seller \(s\) has private information, \(q_s \in [0,1]\). We refer to \(q_s\) as the quality of the product offered by seller \(s\); it can also be interpreted as seller \(s\)'s cost or reservation value.\footnote{These interpretations are discussed further in \autoref{dnp}.} Qualities are independently and identically distributed according to a continuous density function \(f\); we denote the corresponding cumulative distribution function by \(F\). We also assume that \(f(q)>0\) for \(q \in (0,1]\) and that \(f(0) = 0\) only if
\(\lim_{q \to 0} \left(F(q)/f(q)\right) = 0.\)
Since we assume that the potential sellers are symmetric, we often suppress the subscript of a seller's quality. 

All agents in our model are expected utility maximizers. If the buyer procures a good from a seller and a transfer \(t\) is made, the seller's payoff is \(t-q\). If a seller does not sell, her payoff is zero. The buyer's valuation for a good of quality \(q\) is a continuous function \(v(q)\); we assume that \(v(0) \ge 0\). If the buyer makes a transfer \(t\) and receives an object of quality \(q\), her payoff is \(v(q)-t\); if no trade occurs, the buyer's payoff is zero.

By the revelation principle, it suffices to focus on direct mechanisms. A direct mechanism is characterized by a pair of functions, \(p_s: [0,1]^n \to [0,1]\) and \(t_s: [0,1]^n \to \mathbb{R}\), for each seller \(s\). If the sellers report \(\boldsymbol{q} \coloneqq (q_1, \ldots, q_n)\), the buyer procures from seller \(s\) with probability \(p_s(\boldsymbol{q})\) and she makes transfer \(t_s(\boldsymbol{q})\) to seller \(s\); we call \(p_s(\cdot)\) the \textbf{ex-post allocation probability} for seller \(s\). Because the buyer wishes to acquire (at most) one unit of the product, for each \(\boldsymbol{q} \in [0,1]^n\), the following feasibility constraint must hold:
\begin{equation} \label{eq:feasibility}
\sum_{s=1}^n p_s(\boldsymbol{q}) \le 1. \tag{F}
\end{equation}
Condition \eqref{eq:feasibility} requires that the probability that the buyer buys from one of the potential sellers is less than or equal to 1.

If seller \(s\) reports \(q'_s\) and assumes that the rest of the sellers report truthfully, she would expect that the buyer procures from her with probability
\[
P_s(q'_s) \coloneqq \int p_{s}\left(q'_{s}, \boldsymbol{q}_{-s}\right) f^{n-1}\left(\boldsymbol{q}_{-s}\right) \, \mathrm{d} \boldsymbol{q}_{-s},
\]
where \(\boldsymbol{q}_{-s} \coloneqq (q_1, \ldots, q_{s-1}, q_{s+1}, \ldots, q_n)\) and \(f^{n-1}(\boldsymbol{q}_{-s}) \coloneqq \prod_{k \ne s} f(q_k)\); she would expect to receive a monetary transfer of
\[
T_s(q'_s) \coloneqq \int t_{s}\left(q'_{s}, \boldsymbol{q}_{-s}\right) f^{n-1}\left(\boldsymbol{q}_{-s}\right) \, \mathrm{d} \boldsymbol{q}_{-s}.
\]
We call \(P_s(\cdot)\) the \textbf{interim allocation probability} for seller \(s\). 
Then the expected payoff of seller \(s\) with quality \(q_s\) from reporting \(q'_s\) can be written as
\[\pi_s(q'_s \mid q_s) \coloneqq T_s(q'_s) - q_s P_s(q'_s),\]
and we let \(\pi_s(q_s)\coloneqq\pi_s(q_s \, | \, q_s)\). For a given direct mechanism \(\{p_s,t_s\}_{s = 1}^n\) , the (expected) social surplus is 
\[\sum_{s=1}^{n} \int_{[0,1]^{n}}\left[v\left(q_{s}\right) - q_s\right] p_{s}(\boldsymbol{q}) f^n(\boldsymbol{q}) \, \mathrm{d}\boldsymbol{q},\]
where \(f^n(\boldsymbol{q}) \coloneqq \prod_{s=1}^n f(q_s)\); the buyer's expected payoff is 
\begin{align}
    \pi_{b} \coloneqq & \sum_{s=1}^{n} \int_{[0,1]^{n}}\left[v\left(q_{s}\right) p_{s}(\boldsymbol{q})-t_{s}(\boldsymbol{q})\right] f^n(\boldsymbol{q}) \, \mathrm{d}\boldsymbol{q}. \label{bbp}
\end{align}
We say that a direct mechanism \(\{p_s,t_s\}_{s = 1}^n\) is \textbf{incentive-compatible} if every seller truthfully reports her quality; formally, for each seller \(s\), all \(q'_s \in [0,1]\) and (almost) all \(q_s \in [0,1]\), 
\(\pi_s(q_s) \ge \pi_s(q'_s \mid q_s)\). We
say that it is \textbf{individually rational} if the buyer and all sellers are willing to participate in the mechanism; that is, \(\pi_b \ge 0\) and \(\pi_s(q_s) \ge 0\) for each seller \(s\) and \(q_s \in [0,1]\). 

\autoref{lemma:icir} characterizes the set of incentive-compatible direct mechanisms and eliminates transfers from the buyer's expected payoff. The proof is standard and hence omitted.

\begin{lemma} \label{lemma:icir}
    Let \(\{p_s(\cdot)\}_{s=1}^{n}\) be a collection of ex-post allocation probabilities, where \(p_s: [0,1]^n \to [0,1]\), satisfying \eqref{eq:feasibility}.
    \begin{itemize}
        \item[(1)] There exists a collection of transfers \(\{t_s(\cdot)\}_{s=1}^{n}\) such that \(\{(p_s(\cdot), t_s(\cdot))\}_{s=1}^n\) is incentive-compatible if and only if for each \(s = 1, \ldots, n\), \(P_s(\cdot)\) is decreasing.\footnote{We use ``decreasing'' in the weak sense: ``strict'' will be added whenever needed.}
        
        \item[(2)] For any incentive-compatible direct mechanism \(\{(p_s(\cdot), t_s(\cdot))\}_{s=1}^n\),
        \begin{itemize}
            \item[i.] it is individually rational if and only if \(\pi_b \ge 0\) and \(\pi_s(1) \ge 0\) for each \(s=1, \ldots, n\); and
            
            \item[ii.] the buyer's expected payoff is given by
            \[\pi_{b} = \sum_{s=1}^{n} \int_{[0,1]^n}\left[v\left(q_{s}\right) - q_s - \frac{F\left(q_{s}\right)}{f\left(q_{s}\right)}\right] p_{s}(\boldsymbol{q}) f^n(\boldsymbol{q}) \, \mathrm{d} \boldsymbol{q} - \sum_{s=1}^{n} \pi_{s}(1).\]
        \end{itemize}
    \end{itemize}
\end{lemma}

By \autoref{lemma:icir}, the weighted average of the buyer's expected payoff (with weight \(\gamma\)) and the social surplus (with weight \(1-\gamma\)) can be written as
\begin{align*}
     & \gamma \left\{\sum_{s=1}^{n} \int_{[0,1]^n}\left[v\left(q_{s}\right) - q_s - \frac{F\left(q_{s}\right)}{f\left(q_{s}\right)}\right] p_{s}(\boldsymbol{q}) f^n(\boldsymbol{q}) \, \mathrm{d} \boldsymbol{q} - \sum_{s=1}^{n} \pi_{s}(1)\right\} + \\
    & (1-\gamma) \sum_{s=1}^{n} \int_{[0,1]^n}\left[v\left(q_{s}\right) - q_s\right] p_{s}(\boldsymbol{q}) f^n(\boldsymbol{q}) \, \mathrm{d} \boldsymbol{q} \nonumber \\
     = \, &  \sum_{s=1}^{n} \int_{[0,1]^n}\left[v\left(q_{s}\right) - q_s - \gamma \frac{F(q_s)}{f(q_s)}\right] p_{s}(\boldsymbol{q}) f^n(\boldsymbol{q}) \, \mathrm{d} \boldsymbol{q} - \gamma \sum_{s=1}^{n} \pi_{s}(1).
\end{align*}
Since individual rationality for the sellers is equivalent to \(\pi_s(1) \ge 0\) for each \(s=1, \ldots, n\), to maximize this weighted average, we set \(\pi_s(1) = 0\) for all \(s = 1, \ldots, n\).\footnote{For \(\gamma > 0\), the condition \(\pi_s(1) = 0\) for all \(s = 1, \ldots, n\) is necessary for a mechanism to be optimal. For \(\gamma = 0\), setting \(\pi_s(1) = 0\) does not change the social surplus but does relax the buyer's individual rationality constraint.} Therefore, we consider the following maximization problem:
\begin{align}
    \max_{\{p_s\}_{s=1}^n}~~~ &~~ \sum_{s=1}^{n} \int_{[0,1]^n}\left[v\left(q_{s}\right) - q_s - \gamma \frac{F(q_s)}{f(q_s)}\right] p_{s}(\boldsymbol{q}) f^n(\boldsymbol{q}) \, \mathrm{d} \boldsymbol{q} \label{eq:obj_ori} \\
    \text{subject to} &~~ \eqref{eq:feasibility} \nonumber \\
    &~~ P_s(\cdot) \text{ is decreasing for each } s = 1, \ldots, n \nonumber \\
    & ~~ \sum_{s=1}^{n} \int_{[0,1]^n}\left[v\left(q_{s}\right) - q_s - \frac{F\left(q_{s}\right)}{f\left(q_{s}\right)}\right] p_{s}(\boldsymbol{q}) f^n(\boldsymbol{q}) \, \mathrm{d} \boldsymbol{q} \ge 0. \nonumber
\end{align}
The inequality constraint requires that the buyer's expected payoff is non-negative.

When \(\gamma = 1\), the above problem becomes the buyer's expected payoff maximization problem, which can be written as
\begin{align*}
    \max_{\{p_s\}_{s=1}^n}~~~ &~~ \sum_{s=1}^{n} \int_{[0,1]^n}\left[v\left(q_{s}\right) - q_s - \frac{F\left(q_{s}\right)}{f\left(q_{s}\right)}\right] p_{s}(\boldsymbol{q}) f^n(\boldsymbol{q}) \, \mathrm{d} \boldsymbol{q} \\
    \text{subject to} &~~ \eqref{eq:feasibility} \\
    &~~ P_s(\cdot) \text{ is decreasing for each } s = 1, \ldots, n.
\end{align*}
The inequality constraint is not needed because setting \(p_s(\boldsymbol{q}) = 0\) for all \(\boldsymbol{q} \in [0,1]^n\) and for \(s = 1, \ldots, n\) is feasible, ensuring that the value of the problem is non-negative.

\subsection{Transforming the problem: A reduced-form approach} \label{rfa}
Instead of solving directly for optimal ex-post allocation probabilities \citep[e.g.,][]{myerson1981,mv95}, we take a reduced-form approach. We first solve for the optimal interim allocation probabilities \(\{P_s(\cdot)\}_{s=1}^{n}\) and then identify a trading mechanism that is consistent with them. This approach is valid because a seller's report affects her incentives only through the interim allocation probability it generates, and the feasibility constraint can be expressed in terms of interim allocation probabilities as well. Working with interim allocations is also convenient: each \(P_s(\cdot)\) is a function of a single variable, whereas ex-post allocation probabilities are functions of \(n\) variables.

To this end, note that the objective function \eqref{eq:obj_ori}, in terms of interim allocation probabilities, can be written as
\[
\sum_{s=1}^{n} \int_{0}^{1}\left[v\left(q_{s}\right) - q_s - \gamma\frac{F\left(q_{s}\right)}{f\left(q_{s}\right)}\right] P_{s}(q_s) f(q_s) \, \mathrm{d} q_s.
\]
Say that the collection of interim allocation probabilities \(\{P_s\}_{s=1}^n\), where \(P_s: [0,1] \to [0,1]\) for each \(s\), is \textbf{implementable} if there exists a collection of ex-post allocation probabilities \(\{p_s\}_{s=1}^n\) satisfying \eqref{eq:feasibility} that induces \(\{P_s\}_{s=1}^n\) as its interim allocations; that is, for each \(s = 1, \ldots, n\) and all \(q_s \in [0,1]\),
\[
P_s(q_s) = \int p_{s}\left(q_{s}, \boldsymbol{q}_{-s}\right) f^{n-1}\left(\boldsymbol{q}_{-s}\right) \, \mathrm{d} \boldsymbol{q}_{-s}.
\]
Since the sellers are symmetric, it is without loss to restrict attention to symmetric interim allocations; thus, we can drop the subscript \(s\) from \(P_s\) and \(q_s\) and write \(P\) and \(q\) instead. Consequently, the objective further reduces to
\begin{equation} \label{eq:obj_red}
n \int_{0}^{1} \left[v(q) - q - \gamma \frac{F(q)}{f(q)}\right] P(q) f(q) \, \mathrm{d}q,
\end{equation}
and the buyer's expected payoff can be written as
\[n \int_{0}^{1} \left[v(q) - q - \frac{F(q)}{f(q)}\right] P(q) f(q) \, \mathrm{d}q.\]

As will become evident, introducing the quantile \(s = F(q)\) makes identifying an optimal interim allocation straightforward. Define
\(\widetilde{P}(s) \coloneqq P(F^{-1}(s))\)
as the \textbf{quantile interim allocation}; because \(f > 0\) on \((0,1]\), by \autoref{lemma:icir}, a collection of ex-post allocation probabilities is a part of an incentive-compatible mechanism if and only if the induced quantile interim allocation is decreasing. 

Let \(\mathcal{P}\) denote the set of decreasing functions mapping \([0,1]\) into itself. For any \(\widetilde{P}', \widetilde{P}'' \in \mathcal{P}\), we say that \(\widetilde{P}'\) \textbf{majorizes} \(\widetilde{P}''\), denoted by \(\widetilde{P}'' \prec \widetilde{P}'\), if 
\begin{equation}\label{equation:majorization}
    \int_{0}^{x} \widetilde{P}''(s) \, \mathrm{d} s \leq \int_{0}^{x} \widetilde{P}'(s) \,\mathrm{d}s ~~\text { for all } x \in[0,1],
\end{equation}
with equality for \(x=1\). Say that \(\widetilde{P}'\) \textbf{weakly majorizes} \(\widetilde{P}''\), denoted by \(\widetilde{P}'' \prec_{w} \widetilde{P}'\), if \eqref{equation:majorization} holds but equality at \(x=1\) is not required. Intuitively, \(\widetilde{P}'\) majorizes \(\widetilde{P}''\) means that \(\widetilde{P}'\) is more variable, or more ``spread out,'' than \(\widetilde{P}''\). In our context, this variability corresponds to the degree of discrimination among seller qualities: a trading mechanism that more sharply distinguishes between qualities generates a more variable interim allocation than one that pools qualities. While majorization keeps the ex-ante allocation probability \(\int_{0}^{1}\widetilde{P}(s)\,\mathrm{d}s\) intact, weak majorization allows the ex-ante allocation probability to be strictly lower.

\citeauthor{border1991}'s (\citeyear{border1991}) celebrated theorem characterizes the set of implementable interim allocations. \autoref{boc} translates Border's condition into our terminology.\footnote{To the best of our knowledge, \cite{ggkms} is the first paper that connects \citeauthor{border1991}'s condition to majorization (see their footnote 4). We omit the proof of \autoref{boc} since it can be proved by slightly modifying the proof of, for example, Theorem 1 in \cite{hr15} or Theorem 3 in \cite{kms}.} Let \(\widetilde{P}^*(s) \coloneqq (1-s)^{n-1}\); it is not difficult to see that \(\widetilde{P}^*(\cdot)\) is the quantile interim allocation induced by the allocation that always procures from the seller with the lowest quality.

\begin{lemma}[Border's condition] \label{boc}
    A decreasing interim allocation probability \(P\) is implementable if and only if the associated quantile interim allocation \(\widetilde{P}(s)\) is weakly majorized by \(\widetilde{P}^*\).
\end{lemma}

Border's condition can be interpreted as follows: discriminating less by pooling qualities is always feasible, but discriminating more than the rule that always trades with the seller with the lowest quality is infeasible. Moreover, weak majorization allows for a reduction in the probability of trade through exclusion.\footnote{Since \(\int_{0}^{1}\widetilde{P}^{*}(s)\,\mathrm{d}s = 1/n\), equality at \(x=1\) in \(\widetilde{P}\prec \widetilde{P}^{*}\) implies \(n\int_{0}^{1}\widetilde{P}(s)\,\mathrm{d}s = 1\), i.e., the buyer procures with probability one, whereas weak majorization allows this probability to be strictly below one.}

To simplify notation, let 
\[h_{\gamma}(q)\coloneqq v(q) - q - \gamma \frac{F(q)}{f(q)}\] 
denote the integrand of \eqref{eq:obj_red} when the weight on the buyer's expected payoff is \(\gamma\); we call \(h_{\gamma}\) the \textbf{weighted virtual surplus}. In particular, let
\[
g(q) \coloneqq h_1(q) = v(q) - q - \frac{F(q)}{f(q)}
\]
denote the buyer's virtual surplus, which is the buyer's valuation net of the virtual cost (i.e., actual cost plus information rent). By \autoref{boc}, the quantile interim allocation of a direct mechanism that maximizes the weighted average can be found by solving\footnote{A solution to problem \eqref{problem:weighted} exists: \(\Omega_w(\widetilde{P}^*)\) is compact (in the \(L^1\) norm topology) by Helly's selection theorem \citep[see, for example, Proposition 1 in][]{kms}, which implies that the constraint set
\[
\Omega_w(\widetilde{P}^*) \cap 
\left\{\widetilde{P} \in \mathcal{P}: \int_{0}^{1} g(F^{-1}(s)) \widetilde{P}(s) \, \mathrm{d}s \ge 0 \right\}
\]
is the intersection of a closed set and a compact set and hence also compact. The argument is completed by noting that the objective function is continuous.}
\begin{align}
    \max_{\widetilde{P} \in \Omega_w(\widetilde{P}^*)}~~ & \int_{0}^{1} h_\gamma(F^{-1}(s)) \widetilde{P}(s) \, \mathrm{d}s \label{problem:weighted} \\
    \text{s.t.}~~~~~~ & \int_{0}^{1} g(F^{-1}(s)) \widetilde{P}(s) \, \mathrm{d}s \ge 0, \nonumber
\end{align}
where
\[
\Omega_w(\widetilde{P}^*) \coloneqq \left\{\widetilde{P} \in \mathcal{P} : \widetilde{P} \prec_w \widetilde{P}^*\right\}.
\]
In particular, the buyer-optimal interim allocation should solve the following problem:
\begin{align}
    \max_{\widetilde{P} \in \Omega_w(\widetilde{P}^*)}~~ & \int_{0}^{1} g(F^{-1}(s)) \widetilde{P}(s) \, \mathrm{d}s. \label{problem:buyer}
\end{align}

\section{Buyer-optimal procurement mechanisms} \label{section:buyer-optimal}
For cleaner intuition and simpler notation, in this section we study a special case of our problem: the buyer's expected payoff maximization problem, in which the welfare weight on the buyer is \(\gamma = 1\). In \autoref{boi} we solve problem \eqref{problem:buyer}, and in \autoref{bim} we identify a trading mechanism that implements the solution we find. All proofs in this section are relegated to \autoref{buyer_optimal_proofs}.

\subsection{Buyer-optimal interim allocation} \label{boi}
To ensure that the monotonicity constraint in problem \eqref{problem:buyer} holds, the ironing technique (\citeauthor{myerson1981}, \citeyear{myerson1981}; \citeauthor{toikka2011}, \citeyear{toikka2011}) may be required. Define \(\widetilde{g}(s) \coloneqq g(F^{-1}(s))\) as the (buyer's) \textbf{quantile virtual surplus}, and let
\[
G(s) \coloneqq \int_{0}^{s} \widetilde{g}(x) \, \mathrm{d}x.
\]
Let \(\overline{G}\) be the concave hull of \(G\); that is, \(\overline{G}(x) \coloneqq \sup \{y: (x,y) \in \mathrm{co}(G)\}\), where \(\mathrm{co}(G)\) is the convex hull of the hypograph of \(G\). Equivalently, \(\overline{G}\) is the pointwise smallest concave function that lies above \(G\). Call \(\overline{g} \coloneqq \overline{G}'\) the \textbf{ironed quantile virtual surplus};\footnote{Because \(\overline{G}\) is concave, it is differentiable almost everywhere. At points where it is not differentiable, we define \(\overline{g}\) as the right derivative.} since \(\overline{G}\) is concave, \(\overline{g}\) is decreasing.

\autoref{p:oia} characterizes a buyer-optimal interim allocation. Adopting the convention that \(\sup \varnothing = 0\), define
\begin{equation} \label{ivs}
    \overline{S} \coloneqq \sup \{s \in [0,1] : \overline{g}(s) \ge 0\};
\end{equation}
in words, \(\overline{S}\) is the highest quantile at which the ironed quantile virtual surplus is non-negative. Intuitively, \(\overline{S}\) is the optimal exclusion cutoff: procuring from sellers with \(s > \overline{S}\) is too expensive relative to the value they provide. Our assumptions on \(v\) and \(F\) guarantee that \(\overline{S}\) is well-defined.

\begin{proposition} \label{p:oia}
    Let \(\{\left[\underline{s}_{i}, \bar{s}_{i}\right)\}_{i \in \mathcal{I}}\) denote a countable collection of disjoint intervals with \(\left[\underline{s}_{i}, \bar{s}_{i}\right) \subseteq [0,\overline{S}]\) for each \(i \in \mathcal{I}\), such that
    \begin{itemize}
        \item \(\overline{G}\) is affine on \(\left[\underline{s}_{i}, \bar{s}_{i}\right)\) for each \(i \in \mathcal{I}\); and
        \item \(\overline{G} = G\) on \([0,\overline{S}] \big\backslash \bigcup_{i \in \mathcal{I}}\left[\underline{s}_{i}, \overline{s}_{i}\right)\).
    \end{itemize}
    Then an optimal interim allocation \(\hat{P}\) satisfies
    \begin{equation} \label{eq:oip}
    	\hat{P}(s) = 
    \begin{cases}
    (1-s)^{n-1} & \text{ if } s \in [0,\overline{S}] \big\backslash \bigcup_{i \in \mathcal{I}}\left[\underline{s}_{i}, \overline{s}_{i}\right), \\
    \frac{\int_{\underline{s}_{i}}^{\overline{s}_{i}} (1-s)^{n-1} \, \mathrm{d}s}{\overline{s}_{i} - \underline{s}_{i}} & \text{ if } s \in \left[\underline{s}_{i}, \overline{s}_{i}\right), \\
    0 & \text{ if } s \in (\overline{S},1].
    \end{cases}
    \end{equation}
\end{proposition}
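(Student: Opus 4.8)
The plan is to solve~\eqref{brp} by relaxing it in two ways and then showing the relaxations bind at the $\hat{P}$ in~\eqref{eq:oip}. First, \emph{iron}: replace the quantile virtual surplus $\widetilde{g}$ by $\overline{g}=\overline{G}'$. Second, among decreasing $\widetilde{P}$ with $\widetilde{P}\prec_w\widetilde{P}^{*}$, bound the ironed objective by a majorization inequality of the kind systematized by \citet{kms} and recalled in \autoref{rmk}. Throughout I use that $\widetilde{P}$ is an interim allocation, hence $0\le\widetilde{P}\le 1$; that $\widetilde{g}=g\circ F^{-1}$ is continuous on $[0,1]$ with $\widetilde{g}(0)=v(0)\ge 0$ under the maintained assumptions on $v$ and $F$ (which is precisely what makes $\overline{S}$ in~\eqref{ivs} well-defined); and that $\overline{G}$ is a finite concave function with $\overline{G}(0)=G(0)=0$ and $\overline{G}(1)=G(1)$, so that $\overline{g}$ is decreasing, with $\overline{g}\ge 0$ on $[0,\overline{S})$ and $\overline{g}<0$ on $(\overline{S},1]$. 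The goal is to exhibit the $\hat{P}$ in~\eqref{eq:oip} as an optimal interim allocation.

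\emph{Step 1 (ironing).} For any feasible $\widetilde{P}$, integrating by parts,
\[
\int_0^1\bigl(\overline{g}-\widetilde{g}\bigr)\widetilde{P}\,\mathrm{d}s=\int_{[0,1]}\bigl(\overline{G}-G\bigr)\bigl(-\mathrm{d}\widetilde{P}\bigr)\ge 0,
\]
the boundary terms vanishing because a concave majorant of $G$ on $[0,1]$ coincides with $G$ at the endpoints, and the integral being nonnegative because $\overline{G}\ge G$ and $\widetilde{P}$ is decreasing. Equality holds exactly when $\widetilde{P}$ is a.e.\ constant on each connected component of $\{\overline{G}>G\}$; in particular whenever $\widetilde{P}$ is constant on every $[\underline{s}_i,\overline{s}_i)$. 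Hence $\int_0^1\widetilde{g}\,\widetilde{P}\le\int_0^1\overline{g}\,\widetilde{P}$ for all feasible $\widetilde{P}$, with equality for $\hat{P}$.

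\emph{Step 2 (majorization).} Since $\widetilde{P}\ge 0$ and $\overline{g}$ is decreasing, $\int_0^1\overline{g}\,\widetilde{P}\le\int_0^1\overline{g}^{+}\widetilde{P}$, where $\overline{g}^{+}:=\max(\overline{g},0)$ is nonnegative, decreasing, and supported on $[0,\overline{S}]$. Writing $\overline{g}^{+}(s)=\overline{g}^{+}(1)+\nu((s,1])$ for the positive measure $\nu:=-\mathrm{d}\,\overline{g}^{+}$ and applying Fubini, $\int_0^1\overline{g}^{+}\widetilde{P}=\overline{g}^{+}(1)\int_0^1\widetilde{P}+\int_{(0,1]}\bigl(\int_0^t\widetilde{P}\bigr)\,\mathrm{d}\nu(t)$; since $\int_0^t\widetilde{P}\le\int_0^t\widetilde{P}^{*}$ for all $t$ by $\widetilde{P}\prec_w\widetilde{P}^{*}$, while $\nu\ge0$ and $\overline{g}^{+}(1)\ge0$, this is at most $\int_0^1\overline{g}^{+}\widetilde{P}^{*}=\int_0^{\overline{S}}\overline{g}(s)(1-s)^{n-1}\,\mathrm{d}s$. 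Combining with Step 1, for every feasible $\widetilde{P}$,
\[
\int_0^1\widetilde{g}\,\widetilde{P}\,\mathrm{d}s\;\le\;\int_0^{\overline{S}}\overline{g}(s)\,(1-s)^{n-1}\,\mathrm{d}s.
\]

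\emph{Step 3 (tightness at $\hat{P}$, and the crux).} One checks $\hat{P}\in\Omega_w(\widetilde{P}^{*})$: it is decreasing since on each $[\underline{s}_i,\overline{s}_i)$ it equals the average of the decreasing $\widetilde{P}^{*}$ over that interval (which lies between the neighboring values of $\widetilde{P}^{*}$, and averages over earlier intervals dominate later ones), and $\hat{P}\prec_w\widetilde{P}^{*}$ since replacing $\widetilde{P}^{*}$ by its mean on each $[\underline{s}_i,\overline{s}_i)$ preserves $\int_0^{\underline{s}_i}$ and $\int_0^{\overline{s}_i}$ and, by concavity of $x\mapsto\int_0^x\widetilde{P}^{*}$, only lowers the partial integrals in between, while $\hat{P}=0\le\widetilde{P}^{*}$ past $\overline{S}$. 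Then $\int_0^1\widetilde{g}\,\hat{P}=\int_0^1\overline{g}\,\hat{P}$ by the equality case of Step 1 (as $\hat{P}$ is constant on each ironing interval), and $\int_0^1\overline{g}\,\hat{P}=\int_0^{\overline{S}}\overline{g}(s)(1-s)^{n-1}\,\mathrm{d}s$ because $\overline{g}$ is \emph{constant} on each $[\underline{s}_i,\overline{s}_i)$ (the slope of an affine piece of $\overline{G}$) while $\hat{P}$ has the same integral as $(1-s)^{n-1}$ over each such interval, $\hat{P}=(1-s)^{n-1}$ off the ironing intervals on $[0,\overline{S}]$, and $\hat{P}=0$ past $\overline{S}$. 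Thus $\hat{P}$ meets the upper bound of Step 2 and solves~\eqref{brp}. The main obstacle is making Steps 1 and 2 cooperate: the ironing inequality binds only for allocations constant on the intervals $[\underline{s}_i,\overline{s}_i)$, whereas the majorization bound is met already by $\widetilde{P}^{*}\mathbf{1}_{[0,\overline{S}]}$, which is \emph{not}; they reconcile because $\overline{g}$ is constant on each ironing interval, so replacing $\widetilde{P}^{*}$ by its mean there changes neither the ironed objective nor the majorization bound. The residual technicalities --- the integration-by-parts and Fubini manipulations, the use of $0\le\widetilde{P}\le1$ in Step 2 (automatic for interim allocations, though not literally written into $\Omega_w(\widetilde{P}^{*})$), and the well-definedness of $\overline{S}$ --- are routine.
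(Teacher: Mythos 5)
Your proof is correct, but it takes a genuinely different route from the paper. The paper argues abstractly: by Bauer's maximum principle the linear objective in \eqref{brp} is maximized at an extreme point of \(\Omega_w(\widetilde{P}^*)\), then \autoref{wmc} and \autoref{kt1} (from \citealp{kms}) pin down the step-average form of such extreme points, and \autoref{pmm} identifies the optimal one with the affine intervals of \(\overline{G}\), with \(\bar{s}=\overline{S}\) coming from the sign of \(\overline{g}\). You instead give a self-contained primal verification: an ironing inequality \(\int\widetilde{g}\,\widetilde{P}\le\int\overline{g}\,\widetilde{P}\) via integration by parts (using \(\overline{G}\ge G\), equality of the two at the endpoints, and monotonicity of \(\widetilde{P}\)), a Hardy--Littlewood-type bound \(\int\overline{g}\,\widetilde{P}\le\int_0^{\overline{S}}\overline{g}\,\widetilde{P}^*\) via the layer-cake/Fubini decomposition of the decreasing \(\overline{g}^+\) together with \(\widetilde{P}\prec_w\widetilde{P}^*\), and then a check that \(\hat{P}\) is feasible and attains the bound because \(\overline{g}\) is constant on each pooling interval. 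What your approach buys is an explicit optimality certificate (the value \(\int_0^{\overline{S}}\overline{g}(s)(1-s)^{n-1}\,\mathrm{d}s\)) without invoking extreme-point machinery, compactness, or the KMS results; what the paper's approach buys is brevity and the structural insight that an optimum of this step-average form must exist. Two caveats, both shared with (or handled more carefully than) the paper: your Step 2 needs \(\widetilde{P}\ge 0\), which is not literally written into \(\Omega_w(\widetilde{P}^*)\) but is intended and which you flag explicitly; and your ``in particular'' in Step 1 implicitly assumes each connected component of \(\{\overline{G}>G\}\) lies inside a single \([\underline{s}_i,\overline{s}_i)\) --- true for the canonical (maximal) decomposition the proposition intends, and in fact necessary for the statement itself, since splitting one affine stretch with \(\overline{G}>G\) at the split point into two adjacent intervals would make the displayed \(\hat{P}\) strictly suboptimal; it would be worth stating that assumption when you invoke the equality case.
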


    \autoref{p:oia} follows from the results concerning maximizing linear functionals under a majorization constraint in \cite{kms}. 
    To better understand the intuition, consider first the simplest case in which the buyer's virtual surplus \(g\) is decreasing (and therefore so is the quantile virtual surplus \(\widetilde{g}\)). This implies that the buyer's quality concerns are globally mild: roughly speaking, the buyer's marginal valuation of quality is dominated by the marginal cost of quality everywhere on the unit interval. In this case, \(G\) is concave, implying \(G = \overline{G}\) and \(\widetilde{g} = \overline{g}\); thus, ironing is not needed. For \(s > \overline{S}\), \(\widetilde{g}(s)\) is strictly negative, and thus the buyer would like to choose the minimal interim allocation 0 on \((\overline{S},1]\). For \(s \le \overline{S}\), the buyer would like to have a higher (lower) interim allocation when \(\overline{g}\) is larger (smaller). Since \(\overline{g}\) is decreasing, she would like to set \(\widetilde{P}(s)\) as high (low) as possible when \(s\) is small (large). This favors the most variable interim allocation \(\widetilde{P}^* = (1-s)^{n-1}\), which corresponds to the buyer always procuring from the seller with the lowest quality.

    When the buyer's virtual surplus is not decreasing, \(\widetilde{g}\) must be increasing around some quantile. This implies that quality concerns are locally severe: the value of procuring higher quality rises faster than the cost. Ideally, the buyer would prefer to favor higher-quality sellers in this region, but incentive compatibility requires the interim allocation probability to decrease with quality. Facing this conflict, the best the buyer can do is to make the interim allocation constant on an interval around that quantile. To satisfy feasibility (i.e., \(\widetilde{P} \prec_w \widetilde{P}^*\)), it suffices to set the interim allocation to the conditional mean of \(\widetilde{P}^*\) on that interval. The ironing procedure identifies these intervals, which are the maximal intervals on which \(G\) differs from \(\overline{G}\); outside of these intervals, the logic in the previous paragraph applies, establishing the optimality of \(\hat{P}\) as defined in \eqref{eq:oip}. This is illustrated in \autoref{fig:prop1_example}.

\begin{figure}[ht]%
    \begin{minipage}{.47\linewidth} 
        \centering
        \subfloat[]{
        \begin{tikzpicture}[xscale=5,yscale=11,>=Stealth]
        	\draw[thick, ->] (0,0) node[below left] {\small \(O\)} -- (1.15,0) node[below] {\small \(q\)};
        	\draw[thick, ->] (0,0) -- (0,0.42);

        	\draw[NavyBlue, thick, domain=0:1] plot (\x, {pow(\x,2) - (2/3)*pow(\x,3)});

            \draw[orange, thick] (0,0) -- (0.75,0.28125);
            \draw[orange, thick, domain=0.75:1] plot (\x, {pow(\x,2) - (2/3)*pow(\x,3)});

        	\node[below] at (1,0) {\small 1};
        	\node[right, NavyBlue] at (0.52,0.17) {\small \(G(q)\)};
        	\node[above, orange] at (0.30,0.13) {\small \(\overline{G}(q)\)};

        	\draw[dotted, thick] (0.75,0.28125) -- (0.75,0) node[below]{\small 0.75};
        \end{tikzpicture}
        \label{fig:prop1G}}
    \end{minipage}
    \begin{minipage}{.47\linewidth}
        \centering
        \subfloat[]{
        \begin{tikzpicture}[xscale=5,yscale=4,>=Stealth]
        	\draw[thick, ->] (0,0) node[below left] {\small \(O\)} -- (1.15,0) node[below] {\small \(q\)};
        	\draw[thick, ->] (0,0) -- (0,1.16);

        	\draw[NavyBlue, thick, domain=0:1] plot (\x, {1-\x});
        	\node[above right, NavyBlue] at (0.1,0.9) {\small \(P^*(q)=1-q\)};

        	\node[left] at (0,0.625) {\small \(5/8\)};
        	\draw[red, thick] (0,0.625) -- (0.75,0.625);
        	\draw[red, thick] (0.75,0.25) node[above right]{\small \(\hat{P}(q)\)} -- (1,0);

        	\draw[dotted, thick] (0.75,0.625) -- (0.75,0) node[below]{\small 0.75};

        	\node[below] at (1,0) {\small 1};
        	\node[left] at (0,1) {\small 1};
    	\end{tikzpicture}
        \label{fig:prop1P}}
    \end{minipage}
 \caption{\small Illustration of \autoref{p:oia} for \(v(q)=-2q^2+4q\), \(q\sim U[0,1]\) and \(n=2\). In panel (a), the blue curve is \(G\) and the orange curve is its concave hull, \(\overline{G}\). In panel (b), the blue curve is \(P^*(q)=1-q\) and the red curve is the optimal interim allocation, \(\hat P(q)\).}%
 \label{fig:prop1_example}%
\end{figure}

\subsection{Implementation: Bid-restricted auctions} \label{bim}
To find the buyer-optimal procurement mechanism, we first need to find some trading mechanisms that implement the optimal interim allocation rule \(\hat{P}(s)\) identified in \autoref{p:oia}. To this end, we introduce the following class of mechanisms.

\begin{definition} \label{def:bra}
    A \textbf{bid-restricted auction (BRA)} is a sealed-bid auction with \(M \in \mathbb{N}\) \textbf{bid intervals} \(\{[\underline{b}_i, \overline{b}_i]\}_{i=1}^{M}\), where \(\underline{b}_1 \ge 0\), \(\overline{b}_M \le 1\), and for all \(i=1,\ldots, M\), \(\underline{b}_i \le \overline{b}_i\) and \(\overline{b}_i < \underline{b}_{i+1}\), with the following rules:
    \begin{itemize}
        \item Any seller who wishes to participate must submit a bid \(b\) within one of the bid intervals; i.e., \(b \in \cup_{i=1}^{M}[\underline{b}_i, \overline{b}_i]\).
        \item The seller whose bid is the lowest wins the auction; ties are broken with equal probability among the sellers submitting the lowest bids.
        \item If the winning bid is the only bid in its bid interval \([\underline{b}_i, \overline{b}_i]\) for some \(i = 1, \ldots, M\), and if the second-lowest bid equals \(\underline{b}_j\) for some \(j > i\) with a total of \(k\) sellers who bid \(\underline{b}_j\),\footnote{If only one seller submits a bid, we take the second-lowest bid to be \(\overline{b}_M\).} then the winning seller receives a payment of \((\underline{b}_j + k\overline{b}_{j-1})/(k+1)\).
        \item Otherwise, the winning seller receives a payment equal to the second-lowest bid.
    \end{itemize}
\end{definition}

Intuitively, a BRA is similar to a second-price auction, with two key exceptions: in a BRA, (i) sellers are allowed to bid only within certain bid intervals; and (ii) a payment reduction rule applies in specific cases. Under this rule, if the winning bid is the only bid in such a bid interval, the winning seller receives a payment that is determined by the other sellers' bids, which may be lower than the second-lowest bid. The following result establishes an important property of BRAs; to simplify the statement, we adopt the convention that \(\overline{b}_0 := 0\). 

\begin{lemma} \label{l:wd}
    In a BRA, it is a weakly dominant strategy for any seller to not bid if her quality exceeds \(\overline{b}_M\), bid her quality when \(q \in [\underline{b}_i, \overline{b}_i)\) for some \(i=1, \ldots, M\), and bid \(\underline{b}_{i}\) if \(q \in [\overline{b}_{i-1}, \underline{b}_{i})\) for some \(i=1, \ldots, M\).
\end{lemma}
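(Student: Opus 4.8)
The plan is to fix an arbitrary seller with quality $q$ and an arbitrary profile of actions of the other $n-1$ sellers, and to show that the action $\beta(q)$ prescribed in the statement does at least as well as every other action — every bid in $\bigcup_{i=1}^{M}[\underline{b}_i,\overline{b}_i]$, or abstaining. Since a seller who abstains or loses earns $0$ and a seller who bids and wins earns her payment minus $q$, it suffices to show that $\beta(q)$ maximizes $\Pi(a):=\big(\text{win probability under }a\big)\cdot\big(\text{payment received when winning}-q\big)$, with $\Pi(\text{abstain})=0$. Throughout, let $c$ denote the lowest bid among the other sellers, with $c:=\overline{b}_M$ when nobody else bids (the sole-bidder convention); note $c\in\bigcup_{i=1}^M[\underline{b}_i,\overline{b}_i]$, so $c$ never lies strictly inside a gap $(\overline{b}_{i-1},\underline{b}_i)$.

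I would first record two facts about the payment rule. \emph{Fact 1:} a winner's payment never exceeds the second-lowest bid, hence never exceeds $\overline{b}_M$; and a winner who bids inside interval $i$ is paid at least $\underline{b}_i$ (non-unique case: the second-lowest bid is $\ge$ her own bid $\ge\underline{b}_i$; unique case: the reduced payment $\frac{1}{k+1}\underline{b}_j+\frac{k}{k+1}\overline{b}_{j-1}$ with $j>i$ is $\ge\overline{b}_{j-1}\ge\overline{b}_i\ge\underline{b}_i$). \emph{Fact 2:} if the seller is the \emph{strict} low bidder, her payment depends only on the others' bids — not on which bid in which interval she uses to undercut them: it equals $c$, except that when $c=\underline{b}_j$ for some $j$ it equals $\frac{1}{k+1}\underline{b}_j+\frac{k}{k+1}\overline{b}_{j-1}$, where $k$ is the number of other sellers bidding $c$. (There is no ambiguity in $j$: a strict low bidder placed in interval $i'$ has $c>\underline{b}_{i'}$, so $c=\underline{b}_j$ forces $j>i'$, and then she is automatically the unique bidder in interval $i'$.) Fact 2 is the device that lets me move a strictly-winning bid between intervals without changing the payoff.

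With these in hand the argument splits into the three regimes. If $q>\overline{b}_M$, the prescription is to abstain; by Fact 1 any winning bid pays at most $\overline{b}_M<q$, so $\Pi\le0$ for every bid and abstaining is optimal. If $q\in[\underline{b}_i,\overline{b}_i)$, the prescription is to bid $q$, and I would compare it to an arbitrary bid $b'$ according to $b'$ versus $c$: (a) if $b'<c$ the seller strictly wins and $\Pi(b')$ equals the common strict-winner value $\rho-q$ of Fact 2 — which equals $\Pi(q)$ if also $q<c$, and is $\le0=\Pi(q)$ if $q\ge c$ since then $\rho\le c\le q$; (b) if $b'>c$ the seller loses, so $\Pi(b')=0\le\Pi(q)$, where $\Pi(q)\ge0$ because bidding $q$ only wins at a price $\ge q$ (using $\overline{b}_{j-1}\ge\overline{b}_i>q$ when $j>i$); (c) if $b'=c$ the seller gets a share of a tie worth $(c-q)/(k+1)$, which is $\le0$ when $q\ge c$ and is otherwise dominated by $\Pi(q)=\rho-q$; the one inequality that needs a line of algebra is $\frac{(\underline{b}_j-q)+k(\overline{b}_{j-1}-q)}{k+1}\ge\frac{\underline{b}_j-q}{k+1}$, true because $\overline{b}_{j-1}>q$. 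Abstaining is dominated since $\Pi(q)\ge0$.

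The regime $q\in[\overline{b}_{i-1},\underline{b}_i)$, with prescribed bid $\underline{b}_i$, is the heart of the proof — and the step I expect to be the main obstacle — because bidding inside the gap is disallowed and the payment-reduction rule has to make ``ducking'' below the gap unattractive. I would condition on $c$. If $c\le\overline{b}_{i-1}$ then $c\le\overline{b}_{i-1}\le q<\underline{b}_i$: bidding $\underline{b}_i$ loses and earns $0$, while every other bid either loses, ties at a price $\le q$, or strictly wins at a price $\le c\le q$, so $0$ is optimal. If $c>\underline{b}_i$, then $\underline{b}_i$ and every lower bid strictly win at the same price $\rho\ge\underline{b}_i>q$ (Facts 1–2), so bidding $\underline{b}_i$ earns $\rho-q>0$ and ties any of them, a bid $>c$ loses, and a bid $=c$ is weakly worse by the tie-versus-strict-win inequality above. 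Finally, if $c=\underline{b}_i$ with $k$ other sellers there, bidding $\underline{b}_i$ yields a tie worth $(\underline{b}_i-q)/(k+1)$, whereas the only deviation that can win — a bid $b'\le\overline{b}_{i-1}$ — strictly wins at the reduced price $\frac{1}{k+1}\underline{b}_i+\frac{k}{k+1}\overline{b}_{i-1}$, for a payoff $\frac{(\underline{b}_i-q)+k(\overline{b}_{i-1}-q)}{k+1}$, and the difference from $(\underline{b}_i-q)/(k+1)$ equals $\frac{k(q-\overline{b}_{i-1})}{k+1}\ge0$ precisely because $q\ge\overline{b}_{i-1}$. Checking this last identity — that the weights $\tfrac{1}{k+1}$ and $\tfrac{k}{k+1}$ in the reduced payment are calibrated so that a seller at the foot of the gap is exactly indifferent between tying at $\underline{b}_i$ and undercutting from below — is the crux; the remainder is bookkeeping over tie-breaking sub-cases and the sole-bidder convention.
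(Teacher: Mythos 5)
Your proposal is correct and takes essentially the same route as the paper's proof: a direct weak-dominance case analysis, conditioning on the lowest rival bid within each of the three quality regimes, and resting on the same two key inequalities generated by the payment-reduction rule, namely \(k(\overline{b}_{j-1}-q)/(k+1)\ge 0\) for \(q\) inside a bid interval and \(k(q-\overline{b}_{i-1})/(k+1)\ge 0\) for \(q\) in a gap. Your Facts 1--2 simply package the observation---used implicitly in the paper---that a strict low bidder's payment depends only on the rivals' bids, so the two arguments are substantively the same.
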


The structure inherited from a second-price auction and the payment reduction rule together imply that bidding as described in \autoref{l:wd} is a weakly dominant strategy. As in standard second-price auctions, in BRAs a seller's bid affects the payment she receives only indirectly, through its influence on the identity of the winner. In particular, except in a narrow set of contingencies, the winner is paid the second-lowest bid. Consequently, the standard second-price auction logic implies that if a seller's quality lies in a bid interval \([\underline b_i,\overline b_i)\), bidding her quality is weakly dominant. Likewise, if her quality lies in a gap \([\overline b_{i-1},\,\underline b_i)\), bidding above \(\underline b_i\) can only reduce the probability of winning and cannot be profitable.

The only departure from the standard second-price payment is driven by the ``gaps'' between the bid intervals. To see this, 
consider a seller with quality \(q\in[\overline b_{i-1},\,\underline b_i)\). Under the strategy described in \autoref{l:wd}, the seller bids \(\underline b_i\), and the same is true for any seller whose quality lies in \([\overline b_{i-1},\,\underline b_i)\); thus, ties at \(\underline b_i\) may occur with positive probability. This gives the seller an incentive to ``undercut across the gap'' to increase her chances of winning. Consider a realized profile of opponents' bids in which the lowest bid among the opponents equals \(\underline b_i\) and exactly \(k\ge 1\) opponents bid \(\underline b_i\). If she bids \(\underline b_i\), she wins with probability \(1/(k+1)\), and when she wins her payment is \(\underline b_i\), so her expected payoff in this contingency is \((\underline b_i-q)/(k+1)\). Submitting a bid \(b\le \overline b_{i-1}\) instead makes her the unique lowest bidder in the contingency above. Under a standard second-price payment, she would still be paid \(\underline b_i\), which would yield payoff \(\underline b_i-q > (\underline b_i-q)/(k+1)\). The payment reduction rule lowers the winner's payment exactly in these contingencies, eliminating this underbidding incentive and restoring the weak dominance of the strategy described in \autoref{l:wd}.

To pin down the reduced payment \(p\) in such cases, we need to ensure that
\[
p-q \le \frac{\underline b_i-q}{k+1} \quad \text{for all } q \in [\overline b_{i-1},\underline b_i),
\]
or equivalently,
\[
p \le \frac{\underline b_i+k q}{k+1} \quad \text{for all } q \in [\overline b_{i-1},\underline b_i).
\]
Because the right-hand side is strictly increasing in \(q\), it suffices that the inequality holds at \(q=\overline b_{i-1}\), i.e., \(p \le (\underline b_i+k\overline b_{i-1})/(k+1)\). Conversely, the reduced payment cannot be chosen strictly below this value: if \(p < (\underline b_i+k\overline b_{i-1})/(k+1)\), there exists \(\varepsilon>0\) such that a seller with quality \(q\in(\overline b_{i-1}-\varepsilon,\overline b_{i-1})\) strictly prefers to overbid to \(\underline b_i\) in the contingency above. Therefore, weak dominance pins down the reduced payment uniquely as \(p=(\underline b_i+k\overline b_{i-1})/(k+1)\).

We are now ready to present the main result of this section. A function \(h\) is said to be \textbf{structured} if there exists a finite partition of \([0,1]\) into intervals on which \(h\) is either increasing or decreasing. If the virtual surplus \(g\) is structured, then there exist \(L \in \mathbb{N}\) and a collection of disjoint intervals \(\{[\underline{s}_i, \overline{s}_i)\}_{i = 1}^{L}\) such that \(\overline{G}\) is affine on each interval \([\underline{s}_i, \overline{s}_i)\) for \(i=1, \ldots, L\) and coincides with \(G\) on the complement \([0,\overline{S}] \big\backslash \bigcup_{i =1}^{L}\left[\underline{s}_{i}, \overline{s}_{i}\right)\), where \(\overline{S}\), the optimal exclusion cutoff, is defined in \eqref{ivs}. We refer to the intervals \(\{[\underline{s}_i, \overline{s}_i)\}_{i = 1}^{L}\) as \emph{pooling intervals}, since the optimal interim allocation is constant on each of them, and to the intervals comprising \([0,\overline{S}] \big\backslash \bigcup_{i =1}^{L}\left[\underline{s}_{i}, \overline{s}_{i}\right)\) as \emph{non-pooling intervals}. It follows that there are at most \(L+1\) non-pooling intervals.

\begin{theorem}\label{t:bra_bo}
Suppose the buyer's virtual surplus \(g\) is structured. 
A buyer-optimal procurement mechanism is a BRA with bid intervals
\(\{[\underline b_i,\overline b_i]\}_{i=1}^M\) defined as follows:
\begin{itemize}
    \item If \(\underline{s}_1=0\), then \(M=L\) and
    \[
    [\underline b_i,\overline b_i]=
    \begin{cases}
    \big[F^{-1}(\overline s_i),\,F^{-1}(\underline s_{i+1})\big], & i=1,\ldots,L-1,\\[2pt]
    \big[F^{-1}(\overline s_L),\,F^{-1}(\overline S)\big], & i=L.
    \end{cases}
    \]
    \item If \(\underline{s}_1>0\), then \(M=L+1\) and
    \[
    [\underline b_i,\overline b_i]=
    \begin{cases}
    \big[0,\,F^{-1}(\underline s_1)\big], & i=1,\\[2pt]
    \big[F^{-1}(\overline s_{i-1}),\,F^{-1}(\underline s_i)\big], & i=2,\ldots,L,\\[2pt]
    \big[F^{-1}(\overline s_L),\,F^{-1}(\overline S)\big], & i=L+1.
    \end{cases}
    \]
\end{itemize}
\end{theorem}

Although a standard second-price procurement auction is generally not optimal in our setting due to quality concerns, \autoref{t:bra_bo} shows that restricting bids to suitably chosen intervals renders a simple second-price-like auction format optimal. The construction in \autoref{t:bra_bo} is chosen so that each non-pooling interval of the optimal interim allocation corresponds to a bid interval, while each pooling interval corresponds to a gap between two bid intervals. Consequently, a seller with a quality quantile in a non-pooling interval bids her quality as in a standard second-price auction, and a seller with a quality quantile in a pooling interval bids the highest quality of that interval. 

\autoref{l:wd} implies that if a seller's quality quantile \(s\) is in a non-pooling interval, she wins if and only if her quality quantile is the lowest; i.e., every other seller's quality quantile exceeds \(s\). Therefore, her quantile interim allocation equals \mbox{\(\hat P(s)=(1-s)^{n-1}\)} on non-pooling intervals. Furthermore, sellers with quality quantiles in the same pooling interval submit the same bid regardless of their exact quality quantiles, which indicates that the interim allocation probability is constant on that interval. Finally, if a seller has a quality quantile above \(\overline{S}\), she does not bid and hence never wins. In the proof, we formally show that the interim allocation induced by the BRA described in \autoref{t:bra_bo} coincides with the optimal interim allocation \eqref{eq:oip} in \autoref{p:oia}, which establishes its optimality.

In certain settings, the optimal BRA takes simple forms.

\begin{corollary} \label{cor:bra_sc}
    \begin{enumerate}[label={(\roman*})]
        \item \label{cor:mono} \citep[][]{mv95} 
        \begin{itemize}
            \item If the virtual surplus is decreasing, the BRA with one bid interval \([0,F^{-1}(\overline{S})]\) is optimal, which is equivalent to a standard second-price auction with reserve price \(F^{-1}(\overline{S})\).
            \item If the virtual surplus is increasing, the BRA with a degenerate bid interval \(\{1\}\) is optimal, which is equivalent to selecting the winning seller at random. 
        \end{itemize}

        \item \citep{lpv} If the virtual surplus is single-peaked, the BRA with one bid interval \([\underline{b}, F^{-1}(\overline{S})]\) is optimal. \label{cor:sp}

        \item If the virtual surplus is single-dipped, the BRA with two bid intervals \([0, \overline{b}]\) and \(\{F^{-1}(\overline{S})\}\) is optimal. \label{cor:sd}
    \end{enumerate}
\end{corollary}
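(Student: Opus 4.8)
The plan is to obtain all three statements directly from \autoref{t:bra_bo} by computing, in each case, the concave hull $\overline{G}$ of $G(s)=\int_{0}^{s} g(F^{-1}(x))\,\mathrm{d}x$ and reading off its pooling and non-pooling intervals. The key preliminary remark is that, because $F^{-1}$ is increasing, the quantile virtual surplus $\widetilde{g}=g\circ F^{-1}$ inherits the shape of $g$: it is decreasing, increasing, single-peaked, or single-dipped exactly when $g$ is. In particular $g$ is structured in all four cases, so \autoref{t:bra_bo} is applicable, and it remains only to identify $\overline{G}$ and then recognize the induced BRA by means of \autoref{l:wd}.

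\emph{Part (i).} If $g$ is decreasing then $\widetilde{g}$ is decreasing, $G$ is concave, and $\overline{G}=G$; there is no pooling, the optimal interim allocation of \autoref{p:oia} is $(1-s)^{n-1}$ for $s\le\overline{S}$ and $0$ otherwise, and it is implemented by the BRA with the single bid interval $[0,F^{-1}(\overline{S})]$. With only one bid interval the payment-reduction rule is never triggered, so by \autoref{l:wd} each seller with $q\le F^{-1}(\overline{S})$ bids her quality and the rest abstain --- a second-price auction with reserve $F^{-1}(\overline{S})$. If $g$ is increasing then $G$ is convex, so $\overline{G}$ is the chord from $(0,0)$ to $(1,G(1))$ and $\overline{g}$ is the constant $G(1)=\int_{0}^{1} g(q)f(q)\,\mathrm{d}q$; the hypothesis $\overline{S}>0$ carried over from \autoref{t:bra_bo} forces $G(1)\ge 0$, so $\overline{S}=1$ and $[0,1)$ is a single pooling interval with $\underline{s}_1=0$. \autoref{t:bra_bo} then gives $M=1$ and $\underline{b}_1=\overline{b}_1=F^{-1}(1)=1$, i.e.\ the degenerate bid interval $\{1\}$; by \autoref{l:wd} every seller bids $1$, so each wins with probability $1/n$: random allocation.

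\emph{Parts (ii) and (iii).} If $g$ is single-peaked, $G$ is convex then concave, with the switch at the peak quantile $s^{*}$; its concave hull consists of a line segment from $(0,0)$ tangent to the graph of $G$ at a point $s^{\dagger}$, followed by $G$ itself on $[s^{\dagger},1]$. Using that $\widetilde{g}$ is increasing on $[0,s^{*}]$ and decreasing on $[s^{*},1]$, one verifies that such a tangency point exists with $s^{\dagger}\in[s^{*},\overline{S}]$; hence there is one pooling interval $[0,s^{\dagger})$ with $\underline{s}_1=0$ and one non-pooling interval $[s^{\dagger},\overline{S}]$, and the first case of \autoref{t:bra_bo} produces the single bid interval $[F^{-1}(s^{\dagger}),F^{-1}(\overline{S})]=:[\underline{b},F^{-1}(\overline{S})]$ --- the LoLA. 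If $g$ is single-dipped, $G$ is concave then convex, with the switch at the dip quantile $s^{\circ}$; in the non-degenerate case its concave hull equals $G$ on $[0,s_1]$ and is linear --- tangent to the graph of $G$ at $s_1\le s^{\circ}$ --- on $[s_1,1]$, giving one non-pooling interval $[0,s_1)$ and one pooling interval $[s_1,1)$ with $\underline{s}_1=s_1>0$. The second case of \autoref{t:bra_bo} then yields $M=2$, $\underline{b}_1=0$, $\overline{b}_1=F^{-1}(s_1)=:\overline{b}$, and $\underline{b}_2=\overline{b}_2=F^{-1}(1)=1$: the two bid intervals $[0,\overline{b}]$ and $\{1\}$.

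\emph{Main obstacle.} The only substantive step is pinning down $\overline{G}$ in parts (ii) and (iii) --- showing the tangency point exists, is unique, and lies inside $[0,\overline{S}]$, so that the configurations required by \autoref{p:oia} and \autoref{t:bra_bo} hold --- together with disposing of the degenerate configurations: when $G$ is already concave (peak or dip at an endpoint) the single-peaked case reduces to the decreasing case and the single-dipped case to a plain reserve auction; when the chord from $(0,0)$ to $(1,G(1))$ lies weakly above $G$, the hull is that chord and the mechanism is random allocation; and when $\overline{S}<s_1$ in the single-dipped case the pooling interval lies outside $[0,\overline{S}]$, so the optimal BRA collapses to the single interval $[0,F^{-1}(\overline{S})]$ and the $\{1\}$ component drops out. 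Once $\overline{G}$ is identified, each conclusion follows by direct substitution into \autoref{t:bra_bo} together with the bidding characterization of \autoref{l:wd}.
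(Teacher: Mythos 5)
Your proposal is correct and follows essentially the same route the paper intends: the corollary is left as a direct consequence of \autoref{p:oia} and \autoref{t:bra_bo}, obtained by identifying the concave hull $\overline{G}$ (hence the pooling and non-pooling intervals) for decreasing, increasing, single-peaked, and single-dipped virtual surplus, exactly as you do. Your explicit handling of the degenerate configurations (tangency at an endpoint, chord hull, $\overline{S}$ below the pooling region) is a reasonable elaboration of details the paper leaves implicit.
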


\autoref{cor:bra_sc} \ref{cor:mono} recovers the symmetric-case results of \cite{mv95}: when quality concerns are globally mild---equivalently, the buyer's virtual surplus is decreasing---the buyer optimally relies on a standard second-price procurement auction with an appropriate reserve price. When quality concerns are sufficiently severe---equivalently, when the buyer's virtual surplus is increasing---it becomes optimal for the buyer to completely abandon competitive bidding and randomly select the winning seller. In our framework, these arise as two polar cases of the BRA: no restrictions on bidding except for the reserve price (so there is only one bid interval, which is the entire relevant interval \([0,F^{-1}(\overline{S})]\)), and the opposite extreme in which only one bid is allowed.

\autoref{cor:bra_sc} \ref{cor:sp} matches the main result in \cite{lpv}: when the virtual surplus is single-peaked, an optimal procurement mechanism is a lowball lottery auction (LoLA), which is a 
second-price auction with both a floor price and a reserve price. This corresponds exactly to a BRA with a single bid interval, allowing for a gap \emph{below} the bid interval: the gap ``flattens'' competition among the sellers offering lower-quality goods (where quality concerns are most acute), while sellers with goods at quality levels above the gap compete as in a standard procurement auction. Finally, \autoref{cor:bra_sc} \ref{cor:sd} shows that the optimal mechanism remains simple when the virtual surplus is single-dipped.

Note that the payment reduction rule does not arise in \cite{mv95} or \cite{lpv}, since single-peakedness implies a single bid interval. In more general settings, however, the payment reduction rule is necessary to deter underbidding by sellers whose quality falls within a gap between two bid intervals. As \autoref{example:reli} illustrates, it remains relevant even when the virtual surplus is single-dipped.

\begin{example} \label{example:reli}
Suppose there are two potential sellers whose costs are identically, independently, and uniformly distributed on \([0,1]\). Interpret \(q\) as a \emph{costly reliability index}: higher \(q\) is more expensive to supply but also yields a more reliable good (e.g., a machine, medical device, or infrastructure component).

The buyer obtains a flow payoff \(\kappa>0\) per unit of time until the procured good breaks down.
Conditional on quality \(q\), the failure time \(T\) is exponentially distributed with rate \(\lambda(q) \coloneqq \hat\lambda-q\), where \(\hat\lambda>1\), and the buyer discounts at rate \(\delta\ge 0\). 
Therefore, the buyer's valuation is
\[
v(q)=\mathbb E\left[\int_0^T \kappa e^{-\delta t}dt \,\bigg| \, q\right]
=\frac{\kappa}{\delta+\lambda(q)} = \frac{\kappa}{\delta+\hat\lambda-q} = \frac{\kappa}{\Lambda-q},
\]
where \(\Lambda \coloneqq \delta+\hat\lambda>1\),
and her virtual surplus is 
\[
g(q)=v(q)-q-\frac{F(q)}{f(q)}
=\frac{\kappa}{\Lambda-q}-2q.
\]

To be more concrete, take \(\Lambda=1.33\) (e.g., \(\delta=0.03\) and \(\hat\lambda=1.3\)) and \(\kappa=1\); in this case, \(g\) is single-dipped.\footnote{More generally, \(g\) is single-dipped whenever \(2(\Lambda-1)^2<\kappa<2\Lambda^2\).} 
We plot \(g\) and the ironed virtual surplus, \(\overline{g}\), in \autoref{fig:relia}. By \autoref{p:oia}, the optimal interim allocation is given by
    \[
    \hat{P}(q) = 
    \left\{\begin{array}{ll}
    	1-q & q < 0.346 \\
    	0.327 & q \ge 0.346,
    \end{array}\right.
    \]
    which is shown in \autoref{fig:relib}.
By \autoref{cor:bra_sc} \ref{cor:sd}, a buyer-optimal mechanism is a BRA with two bid intervals, \([0,0.346]\) and \(\{1\}\). The purpose of excluding bids in \((0.346,1)\) is to weaken price competition among sellers whose qualities lie in this region. This restriction raises the buyer's expected payoff because, absent it, whenever both sellers' qualities fall within \([0.346,1]\) the second-price auction will always select the lower-quality seller, which \autoref{fig:relia} indicates would lead to a low virtual surplus. 

\begin{figure}[ht]%
    \begin{minipage}{.47\linewidth} 
        \centering
        \subfloat[]{
        \begin{tikzpicture}[xscale=5,yscale=4, >=Stealth]
        	\draw[thick, ->] (0,0) node[below left] {\small \(O\)} -- (1.15,0) node[below] {\small \(q\)};
        	\draw[thick, ->] (0,0)-- (0,1.15);

        	\draw[DarkBlue, thick, domain=0:1] plot (\x, {1/(1.33 - \x) - 2*\x});

        	\draw[orange,thick] (0.346,0.324) -- (1,0.324);
        	\draw[orange,thick,domain=0:0.346] plot (\x, {1/(1.33 - \x) - 2*\x});

        	\node[below] at (1,0) {\small 1};
        	\node[above, orange] at (0.25,0.55){\small \(\overline{g}(q)\)};
        	\node[above left, DarkBlue] at (0.93,0.5){\small \(g(q)\)};

        	\draw[dotted, thick] (0.346,0.324) -- (0.346,0) node[below]{\small 0.346};
        \end{tikzpicture}
        \label{fig:relia}}
    \end{minipage}
    \begin{minipage}{.47\linewidth}
        \centering
        \subfloat[]{
        	\begin{tikzpicture}[xscale=5,yscale=4, >=Stealth]
        	\draw[thick, ->] (0,0) node[below left] {\small \(O\)} -- (1.15,0) node[below] {\small \(q\)};
        	\draw[thick, ->] (0,0)-- (0,1.15);

        	\draw[DarkBlue, thick, domain=0:1] plot (\x, {1-\x});
        	\node[above right, DarkBlue] at (0.5,0.5) {\small \(P^*(q)=1-q\)};

        	\node[left] at (0,0.327) {\small 0.327};

        	\draw[red,thick] (0.346,0.327) -- (1,0.327);
        	\draw[red,thick] (0,1) -- (0.346,0.654);
        	\node[above, red] at (0.3,0.75) {\small \(\hat{P}(q)\)};

        	\draw[dotted, thick] (0.346,0.654) -- (0.346,0) node[below]{\small 0.346};
        	\draw[dotted, thick] (1,0.327) -- (1,0) node[below] {\small 1};
        	\node[left] at (0,1) {\small 1};
    	\end{tikzpicture}      
        \label{fig:relib}}
    \end{minipage}
 \caption{\small In panel (a), the blue curve is the buyer's quantile virtual surplus \(\widetilde g\) (here \(\widetilde g(s)=g(q)\) since \(q\) is uniformly distributed), and the orange curve is the ironed quantile virtual surplus \(\overline g\). In panel (b), the blue curve is \(P^*(q)=1-q\) that appears in Border's condition, and the red curve is the optimal interim allocation \(\hat P(q)\).}%
 \label{fig:reli_example}%
\end{figure}

Recall the \emph{payment reduction rule} in \autoref{def:bra}: when the winning bid is the only bid in its bid interval \([\underline{b}_i,\bar b_i]\) and the second-lowest bid equals the lower bound \(\underline{b}_j\) of some higher bid interval \(j>i\), with \(k\) other sellers also bidding \(\underline{b}_j\). Then the winner is paid \((\underline{b}_j+k\,\bar b_{j-1})/(k+1)\), which is (weakly) below the second-lowest bid \(\underline{b}_j\).

In this example, the optimal BRA has two bid intervals: \([0,0.346]\) and \(\{1\}\). Consider the event in which one seller bids in \([0,0.346]\) (and hence wins) and the other seller bids \(1\). Here the winning bid is the only bid in the lower interval, and the second-lowest bid equals the lower bound of the higher interval. Since there is exactly one other seller bidding \(1\) (so \(k=1\)), the payment reduction rule sets the winning seller's payment to \((1+0.346)/2 = 0.673 < 1\) rather than paying the second-lowest bid \(1\), as in a standard second-price auction.

This reduction is exactly what deters sellers whose qualities lie in \([0.346,1]\) from underbidding just below \(0.346\). Suppose both sellers' qualities are in \( [0.346,1]\), so in equilibrium both bid \(1\) and each seller wins with probability \(1/2\), earning expected payoff \((1-q)/2\). If a seller with \(q\in[0.346,1]\) deviates to a bid just below \(0.346\), she wins for sure, but her payment is reduced to \(0.673\), yielding expected payoff \(0.673-q\). The deviation is unprofitable because \(0.673-q \le (1-q)/2\) whenever \(q\ge 0.346\).

Note that the payment reduction rule is \emph{inactive} in all other cases: if both bids lie in \([0,0.346] \), the winning seller is paid the second-lowest bid as usual; and if both bids equal \(1\), the winning seller is chosen with equal probability and paid \(1\).

Finally, the buyer's expected payoff under the buyer-optimal BRA is about 0.448; by comparison, the standard second-price auction and random assignment achieve payoffs of 0.413 and 0.394, respectively. In fact, the optimal LoLA mechanism in this example is a standard second-price auction. Thus, the buyer-optimal BRA improves the buyer’s expected payoff by at least 8.3 percent relative to the three benchmark mechanisms. \hfill \(\lozenge\)
\end{example}
 
\section{The general problem} \label{sop}
In this section, we consider the more general objective of maximizing any weighted average of the buyer's expected payoff and the social surplus. We show that under certain regularity conditions, a BRA remains optimal, although a random reserve price may need to be introduced when the constraint that the buyer's expected payoff be non-negative binds.

If individual rationality for the buyer, \(\pi_b \ge 0\), need not hold---that is, the buyer's expected payoff can be negative\footnote{This case may be relevant to, for example, government procurement settings, where the government agency prioritizes social benefit and conducts procurement even when a cost-benefit analysis suggests that its expected cost may exceed the expected benefit.}---then the problem of finding the quantile interim allocation that maximizes the weighted average, \eqref{problem:weighted}, can be simplified to 
\[\max_{\widetilde{P} \in \Omega_w(\widetilde{P}^*)}~~ \int_{0}^{1} h_\gamma(F^{-1}(s)) \widetilde{P}(s) \, \mathrm{d}s.\]
The analysis in \autoref{section:buyer-optimal} implies that an analogue of \autoref{t:bra_bo} then holds:

\begin{customthm}{1*} \label{thm:star}
    Suppose the buyer's expected payoff is allowed to be negative. If the weighted virtual surplus \(h_{\gamma}\) is structured, then an optimal procurement mechanism is a BRA whose bid intervals are determined as in \autoref{t:bra_bo}, with \(h_{\gamma}\) in place of \(g\). 
\end{customthm}

However, when the buyer's individual rationality constraint is imposed, the problem becomes less straightforward. \autoref{ex:buyer-ir} shows that imposing the buyer's individual rationality constraint may drastically change the optimal mechanism and a slightly modified version of the BRA may be required.

\begin{example}\label{ex:buyer-ir}
	Suppose there are two potential sellers whose qualities are identically, independently, and uniformly distributed on \([0,1]\). The buyer's valuation of the good is given by \(v(q)=2.6\,q-2.85\,q^2+2.25\,q^3\), and the designer wishes to maximize the expected social surplus. If the buyer's expected payoff can be negative, the designer will choose an interim allocation that solves	
	\[\max_{\widetilde{P} \in \Omega_w(\widetilde{P}^*)}~~ \int_{0}^{1} h_0(q) \widetilde{P}(q) \, \mathrm{d}q,\]
	where \(h_0(q) \coloneqq v(q) - q = 1.6\,q-2.85\,q^2+2.25\,q^3\). Since \(h_0\) is increasing, \autoref{cor:bra_sc} \ref{cor:mono} implies that a BRA with a degenerate bid interval \(\{1\}\) is optimal, which is equivalent to random allocation. However, it can be calculated that the buyer's expected payoff is \(\pi_b = -0.0875 < 0\).
	
	Now suppose we impose the individual rationality constraint on the buyer; that is, \(\pi_b \ge 0\). The designer's problem then becomes
	\begin{align*}
		\max_{\widetilde{P} \in \Omega_w(\widetilde{P}^*)}~~ & \int_{0}^{1} h_0(q) \widetilde{P}(q) \, \mathrm{d}q \\
		\text{s.t.}~~~~~~ & \int_{0}^{1} g(q) \widetilde{P}(q) \, \mathrm{d}q \ge 0,
	\end{align*}
    where \(g(q) \coloneqq v(q) - 2q = 0.6\,q-2.85\,q^2+2.25\,q^3\); \(g\) has a peak followed by a trough and is non-positive on \([4/15,1]\).
	The unconstrained optimal mechanism discussed above is infeasible here since the buyer's expected payoff is strictly negative. Using a Lagrangian approach, one can show that an optimal interim allocation is given by
	\[
P^*(q)=
\begin{cases}
1-\dfrac{a^*}{2}, & 0\le q\le a^*,\\[6pt]
1-q, & a^*<q\le b^*,\\[6pt]
\bar P^*, & b^*<q\le 1,
\end{cases}
\]
where \(a^*=0.287\), \(b^* = 0.4335\), and \(\bar P^*=0.0084\).

In this example, absent the non-negativity constraint, the social surplus--maximizing mechanism yields \(\pi_b<0\), so once \(\pi_b\ge 0\) is imposed, the constraint binds at the optimum. Because \(h_0\) is increasing, social surplus favors admitting higher qualities, but including the highest qualities can be too costly for the buyer's expected payoff. In a setting where a deterministic exclusion cutoff can be adjusted continuously, the cutoff will be set so that \(\pi_b=0\). However, our setting often requires pooling over an interval of qualities, and hence a deterministic cutoff cannot exclude only a part of such an interval; it can only either include it or exclude it entirely. In particular, for the interval \((b^*,1]\), fully admitting it violates \(\pi_b\ge 0\), while fully excluding it sacrifices too much social surplus. Thus, the optimal interim allocation is a constant \(\bar P^*\) on \((b^*,1]\) but strictly less than the interim allocation probability of \((1-b^*)/2 = 0.28325\) under standard pooling. This motivates a device that reduces, but does not eliminate, allocation in that region.

We implement \(P^*\) by introducing an extra bid \(1\) in addition to the regular bid interval \([a^*,b^*]\). Regular bids compete as in a standard BRA, while sellers with qualities above \(b^*\) submit the extra bid \(1\). After bids are submitted, with probability \(\zeta\) all bids equal to \(1\) are treated as valid, and otherwise all such bids are ignored. Payments follow the BRA logic, except when a seller wins with a regular bid in \([a^*,b^*]\) and the second-lowest valid bid is the extra bid \(1\): in a standard BRA where \(\{1\}\) is an ordinary highest bid interval, the payment reduction rule will set the payment to \((1+b^*)/2\), whereas here it is scaled by \(\zeta\), yielding \(\zeta(1+b^*)/2\). This scaling is needed because a seller with quality \(q>b^*\) who underbids into \([a^*,b^*]\) is admitted for sure, rather than being admitted only with probability \(\zeta\), so the payment rule must discount the effect of a runner-up bid of \(1\) to keep such underbidding unprofitable. Consequently, a seller with \(q>b^*\) can win only when bids of 1 are treated as valid and the opponent also bids \(1\), in which case the tie is broken with equal probability. This yields the interim allocation probability \(P(q)=\zeta(1-b^*)/2\) for all \(q>b^*\); choosing \(\zeta=2\bar P^*/(1-b^*)\) makes this probability equal to \(\bar P^*\).
\hfill \(\lozenge\)
\end{example}

More generally, the non-negativity constraint in the general problem \eqref{problem:weighted} plays a role only when the unconstrained solution would yield a strictly negative expected payoff for the buyer; otherwise, the characterization from \autoref{thm:star} applies directly. When the non-negativity constraint binds, which is especially likely when the information rents are large (captured by \(F/f\)) or the weight on the buyer's expected payoff, \(\gamma\), is small,\footnote{This is because, relative to the buyer's payoff, the weighted objective places only weight \(\gamma\) on information rents.} the optimal mechanism generally involves excluding highest qualities, at which trade remains valuable for the weighted objective but is too costly for the buyer's expected payoff. In some environments, the constraint can be met by tightening a deterministic reserve price (equivalently, lowering the highest admitted quality). The optimal mechanism remains a standard BRA, with bid intervals determined by the same construction as before, but with the exclusion cutoff (\(\overline{b}_M\)) shifted downward to satisfy the non-negativity constraint. However, a difficulty arises when the exclusion cutoff that binds \(\pi_b = 0\) falls within a pooling region, where all sellers whose qualities fall in this region must be treated identically. This may occur when the effective virtual surplus (which adjusts the weighted virtual surplus to reflect the binding constraint) is locally increasing. As illustrated in \autoref{ex:buyer-ir}, introducing an extra high bid on top of a standard BRA allows for admitting this pool of qualities only with a specific probability, which binds the non-negativity constraint without completely discarding the surplus generated by these trades.

We formally define the needed modification of the BRA (\autoref{def:abra}) in \autoref{pso}, and we call the resulting mechanism the \emph{augmented bid-restricted auction} (aBRA for short). In brief, we modify a BRA in three ways to obtain an aBRA. First, we may need to introduce an extra bid \(B\) that is strictly larger than the upper endpoint of the highest bid interval, \(\bar{b}_M\). Second, if the extra bid \(B\) is introduced, any seller who submits \(B\) is qualified to participate in the auction only with probability \(\zeta \in (0,1)\). Finally, if the second-lowest (valid) bid is \(B\), the winning seller receives a reduced payment that is further adjusted by the qualification rate \(\zeta\) relative to the reduced payment in a standard BRA, ensuring that no seller with quality \(q \in (\bar{b}_M, B]\) has an incentive to bid below \(\bar{b}_M\) to take advantage of the guaranteed qualification. 

An aBRA can also be interpreted as a BRA with \(M+1\) bid intervals: \(\{[\underline{b}_i, \overline{b}_i]\}_{i=1}^{M}\) and \(\{B\}\), with a slightly adjusted payment rule and a \emph{random reserve price}. The random reserve price effectively takes the value \(B\) with probability \(\zeta\) and takes the value \(\overline{b}_M\) with complementary probability. This reserve price is only realized after all sellers have submitted their bids.

In \autoref{pso}, we formally show that under mild regularity conditions, an aBRA (in which the extra bid/random reserve price is introduced if necessary) maximizes any weighted average of the buyer's expected payoff and the social surplus, subject to the constraint that the buyer's expected payoff remains non-negative (\autoref{t:bra_wo}). We also identify environments in which some simple trading mechanisms are optimal (\autoref{cor:spa} and \autoref{cor:single_peaked}).

\section{Conclusion} \label{s:conclusion}
In this study we explored procurement design problems in which the buyer's valuation of the good supplied depends directly on its quality, which is both unverifiable and unobservable. We analyzed the problem of maximizing an arbitrary weighted average of the buyer's expected payoff and the social surplus, subject to the constraint that the buyer's expected payoff remains non-negative. To tackle this problem, we employed a novel reduced-form approach utilizing techniques from linear optimization under a majorization constraint. We found that a \emph{bid-restricted auction}---a mechanism similar to a second-price auction, featuring a dominant strategy equilibrium but restricting sellers to bidding within specified intervals---is optimal.

In our analysis, we abstracted from collusion, repeated interaction, the possibility of defaults, and endogenous seller entry to isolate the role of quality concerns in procurement. Exploring the optimal procurement mechanism in the presence of these concerns could be an interesting direction for future research. It would also be valuable to extend the model to allow for ex-ante asymmetric sellers, to study comparative statics as the buyer's quality concerns intensify, and to incorporate additional design constraints (e.g., limits on monetary expenditures such as those imposed by anti-deficit rules). We leave these extensions for future work.

\begin{singlespace}
	\addcontentsline{toc}{section}{References}
	\bibliographystyle{aea}
	\bibliography{procurement.bib}
\end{singlespace} 
\addtocontents{toc}{\protect\setcounter{tocdepth}{1}} 

\begin{appendices}
\section{Results on majorization} \label{majorization_results}
The following results are taken from \cite{kms} and modified to our environment. Let \(A\) be an arbitrary subset of a topological vector space; we denote the set of its extreme points by \(\mathrm{ext} A\). Denote the set of decreasing functions in \(L^1\) that are majorized by \(f \in L^1\) by
\[
\Omega(f) := \{g \in L^1 : g \text{ is decreasing, } g \prec f\};
\]
similarly, denote 
\(\Omega_w(f) := \{g \in L^1 : g \text{ is decreasing, } g \prec_w f\}\).

\begin{theorem}[Theorem 1 in \citeauthor{kms}, \citeyear{kms}] \label{kt1}
    Let \(f \in L^1\) be decreasing. Then \(h \in \mathrm{ext}\Omega(f)\) if and only if there exists a countable collection of disjoint intervals \(\left[\underline{x}_{i}, \overline{x}_{i}\right)\) indexed by \(i \in I\) such that for almost all \(x \in [0,1]\),
    \[
    h(x)=\left\{\begin{array}{ll}f(x) & \text { if } x \notin \bigcup_{i \in I}\left[\underline{x}_{i}, \overline{x}_{i}\right) \\ \frac{\int_{\underline{x}_{i}}^{\overline{x}_{i}} f(s) \mathrm{d} s}{\overline{x}_{i}-\underline{x}_{i}} & \text { if } x \in\left[\underline{x}_{i}, \overline{x}_{i}\right).\end{array}\right.
    \]
\end{theorem}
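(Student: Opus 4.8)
The plan is to recast \autoref{kt1} as a description of the extreme points of a convex set of concave functions. For a decreasing $g\in L^1(0,1)$, let $G(x):=\int_0^x g(s)\,\mathrm{d}s$; the map $g\mapsto G$ is an affine bijection between the decreasing functions in $L^1(0,1)$ and the concave functions on $[0,1]$ that vanish at $0$, and $g\prec f$ holds precisely when $G\le F$ on $[0,1]$ and $G(1)=F(1)$, where $F(x):=\int_0^x f(s)\,\mathrm{d}s$. Hence $h\in\mathrm{ext}\,\Omega(f)$ iff $H(x):=\int_0^x h$ is an extreme point of
\[
\mathcal{H}:=\bigl\{H:[0,1]\to\mathbb{R}\ \text{concave}:\ H(0)=0,\ H(1)=F(1),\ H\le F\bigr\}.
\]
Let $E:=\{x\in[0,1]:H(x)=F(x)\}$; this set is closed and contains $0$ and $1$, and we write $[0,1]\setminus E=\bigsqcup_i(\underline{x}_i,\overline{x}_i)$. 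Since $H=F$ on $E$, we get $h=f$ a.e.\ on $E$; and if in addition $H$ is affine on each $(\underline{x}_i,\overline{x}_i)$, then on that interval $h$ equals the constant $\bigl(H(\overline{x}_i)-H(\underline{x}_i)\bigr)/(\overline{x}_i-\underline{x}_i)=\bigl(\int_{\underline{x}_i}^{\overline{x}_i}f\bigr)/(\overline{x}_i-\underline{x}_i)$. So the form of $h$ claimed in \autoref{kt1} is exactly equivalent to the condition ``$H=F$ on $E$ and $H$ affine on every component of $[0,1]\setminus E$,'' and it suffices to prove that this condition characterizes $\mathrm{ext}\,\mathcal{H}$.

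For ``if'', suppose $H$ satisfies the condition and $H=\tfrac12(H_1+H_2)$ with $H_1,H_2\in\mathcal{H}$. On $E$, $F=H=\tfrac12(H_1+H_2)\le\tfrac12(F+F)=F$, so $H_1=H_2=F$ on $E$, in particular at every $\underline{x}_i$ and $\overline{x}_i$. On a component $(\underline{x}_i,\overline{x}_i)$, each $H_j$ is concave with these two endpoint values, hence lies above the chord joining them, which is exactly $H$ there; then $H=\tfrac12(H_1+H_2)\ge H$ forces $H_1=H_2=H$ on the interval. Thus $H_1=H_2=H$ on $[0,1]$, and $H$ is extreme.

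For ``only if'' I would argue by contraposition. If $H$ fails the condition, then---since $H=F$ on $E$ automatically---there is a component $(a,b)$ of $[0,1]\setminus E$ on which $H$ is not affine; fix $a<p<q<b$ with $H$ still not affine on $[p,q]$, so that $\delta:=\min_{[p,q]}(F-H)>0$. The goal is to build $\psi\not\equiv0$ supported in $[p,q]$ with $H\pm\psi\in\mathcal{H}$, whence $H=\tfrac12\bigl((H+\psi)+(H-\psi)\bigr)$ is a nontrivial convex combination in $\mathcal{H}$ and $H\notin\mathrm{ext}\,\mathcal{H}$. Taking $\psi$ to be a small multiple of $(H-L)\mathbf{1}_{[p,q]}\ge0$, where $L$ is the chord of $H$ over $[p,q]$, makes $H_1:=H-\psi$ a convex combination of $H$ and $L$ on $[p,q]$---hence concave, with concavity at $p,q$ automatic from the slope inequalities for concave functions---and $H_1\le H\le F$. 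The delicate object is $H_2:=H+\psi$: it is concave on $(p,q)$ by construction and satisfies $H_2\le F$ once $\|\psi\|_\infty<\delta$, but concavity at $p$ and $q$ requires either that $\psi$ vanish to first order there or that $H$ have corners there to absorb the added slope. This forces a case split: where $H$ is smoothly and strictly concave I would instead use $\psi\in C_c^\infty\bigl((p,q)\bigr)$ with $\|\psi''\|_\infty$ below the local curvature bound; where $H$ bends through corners I would anchor a small ``chord-gap'' or ``tent'' perturbation at those corners (with a single corner at $c$, perturbing the height $H(c)$, using that $H$ must itself kink at $a$ and $b$ since $F$ is strictly concave there).

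The main obstacle is precisely this perturbation step. The three demands on $\psi$ conflict: keeping $H+\psi$ concave bounds the curvature of $\psi$ by that of $H$; keeping $H+\psi\le F$ forces $\psi$ to live away from the contact set $E$, where the slack $F-H$ degenerates to zero; and keeping $\psi\neq0$ requires $H$ to genuinely curve inside $(a,b)$. Reconciling these---smooth curvature via a mollified bump, corners via relocating them, and the boundary behaviour of $H$ at $a$ and $b$---is the technical heart of the argument; once it is in place, the equivalence with $\mathrm{ext}\,\mathcal{H}$ and the translation back to $\Omega(f)$ are routine.
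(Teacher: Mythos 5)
The paper itself does not prove \autoref{kt1}: the result is imported from \citet{kms} (``the following results are taken from \citet{kms}''), so your attempt can only be measured against the original argument there. Your reduction is sound: passing from $h$ to $H(x)=\int_0^x h(s)\,\mathrm{d}s$, identifying $\Omega(f)$ with the set $\mathcal{H}$ of concave functions with $H(0)=0$, $H(1)=\int_0^1 f$, $H\le F$, and translating the stated form of $h$ into ``$H=F$ on the contact set $E$ and $H$ affine on each component of $[0,1]\setminus E$'' are all correct (modulo routine measure-theoretic care), and your sufficiency (``if'') argument via the chord comparison on each component is complete and correct.

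The genuine gap is the necessity direction, which you explicitly leave unfinished --- and it is exactly the half the paper relies on: Bauer's maximum principle produces an extreme-point maximizer, and \autoref{kt1} is then invoked to conclude that this maximizer has the averaged form used in \autoref{p:oia}. Your only concrete perturbation, $\psi=\varepsilon(H-L)\mathbf{1}_{[p,q]}$, fails for $H+\psi$ at $p$ and $q$, as you yourself note; the proposed repairs are neither carried out nor exhaustive. In particular, the typical non-extreme configuration to be excluded is $H$ piecewise affine on $(a,b)$ with one interior corner: there is no smoothly strictly concave region to mollify, and the ``tent''/height perturbation at the corner runs into the same concavity problem at the edges of its support that you flag at $p$ and $q$ (raising the corner increases the incoming slope, which violates concavity at a point where $H$ has no kink to absorb it). Moreover, the assertion that $H$ must kink at $a$ and $b$ ``since $F$ is strictly concave there'' is unjustified: $f$ is only weakly decreasing, so $F$ may be affine near $a$ or $b$. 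Until the two-sided perturbation is constructed in full generality --- which is the actual technical content of Theorem 1 in \citet{kms} --- the characterization, and hence the proof, is incomplete.
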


For \(B \subseteq [0,1]\), denote by \(\mathbf{1}_{B}(x)\) the indicator function of \(B\): it equals 1 if \(x \in B\) and 0 otherwise.

\begin{corollary}[Corollary 2 in \citeauthor{kms}, \citeyear{kms}] \label{wmc}
    Let \(f \in L^1\) be decreasing. Then \(h \in \mathrm{ext} \Omega_w(f)\) if and only if there exists \(\theta \in [0,1]\) such that \(h \in \mathrm{ext}\Omega(f \cdot \mathbf{1}_{[0,\theta]})\) and \(h(x)=0\) for almost all \(x \in (\theta, 1]\).
\end{corollary}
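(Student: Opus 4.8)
The plan is to reduce the characterization to \autoref{kt1} by introducing a truncation parameter. For \(\theta\in[0,1]\) write \(f_\theta:=f\cdot\mathbf{1}_{[0,\theta]}\); since the relevant functions are nonnegative (they are allocation probabilities), \(g\prec f_\theta\) implies \(\int_0^x g\le\int_0^{\min\{x,\theta\}}f\le\int_0^x f\), so \(\Omega(f_\theta)\subseteq\Omega_w(f)\). For the ``if'' direction, suppose \(h\in\mathrm{ext}\,\Omega(f_\theta)\) and \(h=0\) a.e.\ on \((\theta,1]\). Then \(h\in\Omega(f_\theta)\subseteq\Omega_w(f)\). If \(h=\tfrac12(h_1+h_2)\) with \(h_1,h_2\in\Omega_w(f)\), then on \((\theta,1]\) the two decreasing nonnegative functions average to \(0\), so each vanishes there; hence \(\int_0^1 h_i=\int_0^\theta h_i\le\int_0^\theta f\) by weak majorization, while \(\tfrac12\big(\int_0^\theta h_1+\int_0^\theta h_2\big)=\int_0^\theta h=\int_0^1 h=\int_0^1 f_\theta=\int_0^\theta f\), forcing \(\int_0^\theta h_i=\int_0^\theta f\). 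Combined with \(\int_0^x h_i\le\int_0^x f\) for \(x\le\theta\), this gives \(h_i\prec f_\theta\), i.e.\ \(h_i\in\Omega(f_\theta)\); extremality of \(h\) in \(\Omega(f_\theta)\) forces \(h_1=h_2=h\), so \(h\in\mathrm{ext}\,\Omega_w(f)\).

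For the ``only if'' direction, let \(h\in\mathrm{ext}\,\Omega_w(f)\), set \(c:=\int_0^1 h\), and let \(\theta\in[0,1]\) be the point with \(\int_0^\theta f=c\) (unique since \(f>0\) on \([0,1)\)). Using \(h\ge 0\) one checks \(h\in\Omega(f_\theta)\): the inequality \(\int_0^x h\le\int_0^{\min\{x,\theta\}}f\) holds for \(x\le\theta\) by weak majorization and for \(x>\theta\) because \(\int_0^x h\le\int_0^1 h=c=\int_0^\theta f\), and \(\int_0^1 h=c=\int_0^1 f_\theta\). Since \(\Omega(f_\theta)\subseteq\Omega_w(f)\) is convex and \(h\) is extreme in \(\Omega_w(f)\), it is extreme in \(\Omega(f_\theta)\); by \autoref{kt1}, \(h\) then has the pooling form there, associated with disjoint intervals \([\underline x_i,\overline x_i)\).

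The crux is to rule out a pooling interval straddling the truncation level, i.e.\ \(\underline x_i<\theta<\overline x_i\) for some \(i\); once this is excluded, \(f_\theta\) vanishes on every pooling interval contained in \((\theta,1]\), so \(h=f_\theta=0\) a.e.\ on \((\theta,1]\) and we are done. If such an interval existed, \(h\) would equal a positive constant \(m\) on \([\underline x_i,\overline x_i)\) with \(0<m<\tfrac{1}{\overline x_i-\underline x_i}\int_{\underline x_i}^{\overline x_i}f\le f(\underline x_i^-)\); one then checks that for small \(\varepsilon>0\) the perturbations \(h\pm\varepsilon\,\mathbf{1}_{[\underline x_i,\overline x_i)}\) stay decreasing (the only binding constraints being \(m+\varepsilon\le f(\underline x_i^-)\) at the left endpoint and \(m-\varepsilon\ge 0\) at the right, where \(h\) already drops to \(0\)) and stay in \(\Omega_w(f)\) (inside \([\underline x_i,\overline x_i)\) one uses \(m+\varepsilon<\tfrac{1}{x-\underline x_i}\int_{\underline x_i}^{x}f\) and \(\int_0^{\underline x_i}h=\int_0^{\underline x_i}f\); for \(x>\overline x_i\) the extra mass \(\varepsilon(\overline x_i-\underline x_i)\) is absorbed by the slack \(\int_\theta^{\overline x_i}f>0\)). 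This exhibits \(h\) as a nontrivial midpoint in \(\Omega_w(f)\), contradicting extremality; hence no straddling occurs.

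The main obstacle is precisely this last step: an extreme point of \(\Omega(f_\theta)\) may well have a pooling interval crossing \(\theta\), so one cannot simply invoke \autoref{kt1} — the variational argument behind it must be re-run ``by hand'' near \(\theta\), and the perturbation \(\pm\varepsilon\,\mathbf{1}_{[\underline x_i,\overline x_i)}\) has to be checked against monotonicity and against all of the (infinitely many) weak-majorization inequalities simultaneously. The remaining ingredients — the inclusion \(\Omega(f_\theta)\subseteq\Omega_w(f)\), the sandwich forcing \(\int_0^\theta h_i=\int_0^\theta f\), and the propagation of the zero tail through averages — are routine.
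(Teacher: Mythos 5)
The paper itself offers no proof of this corollary---it is imported verbatim from \citet{kms}---so I am judging your argument on its own terms. Your skeleton (truncate \(f\) to \(f_\theta=f\cdot\mathbf{1}_{[0,\theta]}\), prove \(\Omega(f_\theta)\subseteq\Omega_w(f)\), transfer extremality both ways, and invoke \autoref{kt1}) is the right one, and the ``if'' direction is correct, granting the nonnegativity of \(f\) and of the elements of \(\Omega_w(f)\), which the paper's displayed definition omits but which is implicit in \citet{kms} and in the application. The ``only if'' direction, however, has a genuine gap at exactly the step you flag as the crux. Your perturbation \(h\pm\varepsilon\,\mathbf{1}_{[\underline x_i,\overline x_i)}\) needs the slack \(\int_\theta^{\overline x_i}f\) to be strictly positive (it is what makes \(m<\tfrac{1}{\overline x_i-\underline x_i}\int_{\underline x_i}^{\overline x_i}f\) strict and what absorbs the extra mass \(\varepsilon(\overline x_i-\underline x_i)\) for \(x\ge\overline x_i\)); you assert this positivity, and you also assert uniqueness of \(\theta\) ``since \(f>0\) on \([0,1)\)'', an assumption not in the statement, which concerns an arbitrary decreasing \(f\in L^1(0,1)\). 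When \(f\) vanishes beyond \(\theta\) the slack can be zero, and then the claim you are trying to force is simply false for your \(\theta\): take \(f=\mathbf{1}_{[0,1/2]}\) and \(h\equiv 1/2\). Here \(c=1/2\), you may pick \(\theta=1/2\), \(h\in\mathrm{ext}\,\Omega(f_{1/2})\) with the single pooling interval \([0,1)\) straddling \(\theta\), and yet \(h\) \emph{is} an extreme point of \(\Omega_w(f)\) (any decomposition forces \(\int_0^1 h_i=1/2=\int_0^1 f\), hence \(h_i\in\Omega(f)\), where \(h\) is extreme by \autoref{kt1}); the corollary holds for this \(h\) only via \(\theta=1\). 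So no perturbation argument can rule out straddling for your choice of \(\theta\).

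The repair is cheap and worth recording: choose \(\theta=\sup\{x:\int_0^x f=c\}\). This leaves \(f\cdot\mathbf{1}_{[0,\theta]}\) unchanged almost everywhere, and any pooling interval straddling this maximal \(\theta\) has \(\int_\theta^{\overline x_i}f=\int_0^{\overline x_i}f-c>0\), so your \(\pm\varepsilon\) argument then goes through (the left-endpoint monotonicity check is fine because the \autoref{kt1} form gives \(h\ge f(\underline x_i^-)>m\) a.e.\ on \([0,\underline x_i)\) once \(m>0\)). Two smaller points: you assert \(m>0\) without justification---if \(m=0\) no contradiction is needed, since \(h\) then already vanishes on \([\underline x_i,1]\) and the conclusion holds for the original \(\theta\); and the whole argument should state up front that \(\Omega_w(f)\) is taken to consist of nonnegative functions (otherwise the averaging-to-zero step on \((\theta,1]\) and the bound \(\int_0^x h\le\int_0^1 h\) both fail). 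With these repairs your proof is complete; for the paper's actual application, where \(f=\widetilde P^*=(1-s)^{n-1}>0\) on \([0,1)\), your argument as written already suffices.
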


Now consider the problem
\begin{equation} \label{lpm}
\max_{m \in \Omega(f)} \int_{0}^{1} c(x) m(x) \, \mathrm{d}x,
\end{equation}
where \(f \in L^1\) is strictly decreasing, and \(c\) is a bounded function. Define 
\(C(x) \coloneqq \int_{0}^{x} c(s) \, \mathrm{d}s\),
and let \(\overline{C}\) be its concave hull. \autoref{pmm} characterizes a solution to problem \eqref{lpm}. 

\begin{proposition}[Proposition 2 in \citeauthor{kms}, \citeyear{kms}] \label{pmm}
    Let \(h \in \mathrm{ext}\Omega(f)\), and let \(\left\{\left[\underline{x}_{i}, \bar{x}_{i}\right) : i \in I\right\}\) be the collection of intervals described in \autoref{kt1}. Then \(h\) is optimal if and only if \(\overline{C}\) is affine on \(\left[\underline{x}_{i}, \bar{x}_{i}\right)\) for each \(i \in I\) and \(\overline{C} = C\) otherwise.
\end{proposition}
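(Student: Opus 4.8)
The plan is to sandwich the objective between two quantities, both equal to the single number $v^* := \int_0^1 \overline{c}(x)\,f(x)\,\mathrm{d}x$, where $\overline{c} := \overline{C}'$ is the ``ironed'' integrand, and then to show that an extreme point $h$ attains $v^*$ exactly when the two stated conditions hold. The upper bound comes from chaining two integration-by-parts inequalities. First, for any decreasing $m \in L^1(0,1)$, set $D := \overline{C}-C$; then
\[
\int_0^1 (\overline{c}-c)\,m\,\mathrm{d}x = \big[D\,m\big]_0^1 - \int_0^1 D\,\mathrm{d}m = \int_0^1 D\,\mathrm{d}(-m) \ \ge\ 0 ,
\]
since the boundary term vanishes by $\overline{C}(0)=C(0)$ and $\overline{C}(1)=C(1)$ (both immediate from the definition of the concave hull on a compact interval), $D\ge 0$ by construction, and $-\mathrm{d}m$ is a nonnegative measure because $m$ is decreasing. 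Second, for $m\in\Omega(f)$, set $E(x):=\int_0^x (f-m)$; then
\[
\int_0^1 \overline{c}\,(f-m)\,\mathrm{d}x = \big[\overline{c}\,E\big]_0^1 - \int_0^1 E\,\mathrm{d}\overline{c} = \int_0^1 E\,\mathrm{d}(-\overline{c}) \ \ge\ 0 ,
\]
since $E(0)=E(1)=0$ (the latter by \eqref{mje}), $E\ge 0$ by \eqref{mjw}, and $-\mathrm{d}\overline{c}\ge 0$ because $\overline{C}$ is concave. Composing, $\int_0^1 c\,m\,\mathrm{d}x \le \int_0^1 \overline{c}\,m\,\mathrm{d}x \le v^*$ for every $m\in\Omega(f)$.

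The conceptually central point is that each inequality is an equality iff the continuous nonnegative integrand vanishes on the support of the nonnegative measure: the first iff $\overline{C}=C$ on $\supp(-\mathrm{d}m)$, the second iff $E\equiv 0$ on $\supp(-\mathrm{d}\overline{c})$. This yields the ``if'' direction at once. Let $h$ be the extreme point with interval collection $\{[\underline{x}_i,\bar{x}_i)\}_{i\in I}$ from \autoref{kt1}, and suppose $\overline{C}$ is affine on each $[\underline{x}_i,\bar{x}_i)$ and $\overline{C}=C$ on the complement. Then $\supp(-\mathrm{d}h)\subseteq [0,1]\setminus\bigcup_i(\underline{x}_i,\bar{x}_i)$ (there $h=f$; at the points $\underline{x}_i,\bar{x}_i$ there are only atoms), and $\overline{C}=C$ on that set (at the endpoints by continuity), so the first inequality is tight; moreover $-\mathrm{d}\overline{c}$ charges no point of any $(\underline{x}_i,\bar{x}_i)$ because $\overline{C}$ is affine there, while $E$ vanishes outside those open intervals because averaging $f$ over each $[\underline{x}_i,\bar{x}_i)$ preserves its integral, so the second is tight too. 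Hence $\int_0^1 c\,h\,\mathrm{d}x=v^*$, and by the upper bound $h$ is optimal.

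For the ``only if'' direction I would first record that $v^*$ is \emph{attained}: the at-most-countable family of maximal open intervals on which $\overline{C}$ is affine and strictly above $C$ --- whose complement is exactly $\{\overline{C}=C\}$ by the standard ``affine-between-contact-points'' property of concave hulls --- yields, via \autoref{kt1}, an extreme point $h^*\in\mathrm{ext}\,\Omega(f)$ that meets both conditions and is therefore optimal by the previous paragraph, with value $v^*$. So if the given extreme point $h$ is optimal, then $\int_0^1 c\,h\,\mathrm{d}x=v^*$, which forces both displayed inequalities to be equalities for $m=h$. Tightness of the first gives $\overline{C}=C$ on $\supp(-\mathrm{d}h)$; since $h$ coincides on $[0,1]\setminus\bigcup_i[\underline{x}_i,\bar{x}_i)$ with the \emph{strictly} decreasing $f$, that support contains all of $[0,1]\setminus\bigcup_i(\underline{x}_i,\bar{x}_i)$, giving $\overline{C}=C$ off $\bigcup_i[\underline{x}_i,\bar{x}_i)$ --- the second condition. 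Tightness of the second gives $E\equiv 0$ on $\supp(-\mathrm{d}\overline{c})$; here strict monotonicity of $f$ enters again: writing $\bar{h}_i = \frac{1}{\bar{x}_i-\underline{x}_i}\int_{\underline{x}_i}^{\bar{x}_i} f$ for the constant value of $h$ on $[\underline{x}_i,\bar{x}_i)$, the map $x\mapsto\int_{\underline{x}_i}^x (f-\bar{h}_i)$ is single-humped on $[\underline{x}_i,\bar{x}_i]$ and vanishes only at its endpoints, so together with $E(\underline{x}_i)\ge 0$ (by \eqref{mjw}) it follows that $E>0$ on $(\underline{x}_i,\bar{x}_i)$. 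Hence $-\mathrm{d}\overline{c}$ puts no mass inside $(\underline{x}_i,\bar{x}_i)$, i.e.\ $\overline{C}$ is affine on $[\underline{x}_i,\bar{x}_i)$ --- the first condition.

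The two integration-by-parts identities and the ``if'' direction are routine once the boundary terms are disposed of. The main obstacle I expect is the ``only if'' direction: it is where strict monotonicity of $f$ must be exploited carefully --- both to guarantee that $\supp(-\mathrm{d}h)$ exhausts the non-pooling region and that $E>0$ strictly inside each pooling interval --- and where the measure-theoretic bookkeeping needs attention (atoms of $-\mathrm{d}h$ at the interval endpoints, possibly adjacent pooling intervals, and the possibility that $\overline{C}$ has countably many affine pieces so that $I$ is infinite).
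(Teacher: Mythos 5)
The paper itself offers no proof of this proposition---it is imported verbatim from \citet{kms}---so there is nothing in-paper to compare against. Judged on its own, your sandwich argument (two integration-by-parts inequalities giving $\int_0^1 c\,m \le \int_0^1 \overline{c}\,m \le \int_0^1 \overline{c}\,f$ for every $m\in\Omega(f)$, followed by complementary-slackness equality conditions on the supports of $-\mathrm{d}m$ and $-\mathrm{d}\overline{c}$) is the standard route to this result, and its main lines are sound, including the way strict monotonicity of $f$ is used in the ``only if'' direction and the attainment argument via the maximal affine intervals of $\overline{C}$.

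There is, however, one concrete gap, and it sits exactly where you wave at ``possibly adjacent pooling intervals'': in the ``if'' direction you assert $\overline{C}=C$ at the endpoints $\underline{x}_i,\bar{x}_i$ ``by continuity,'' which only works when the endpoint is a limit of points of $[0,1]\setminus\bigcup_i[\underline{x}_i,\bar{x}_i)$. When two pooling intervals share an endpoint this fails, $-\mathrm{d}h$ has an atom there (the averages of the strictly decreasing $f$ over consecutive intervals differ), and your first inequality can be strict. Indeed, the statement as transcribed is false in that configuration: take $f(x)=1-x$ and $c(x)=2x-1$, so $C(x)=x^2-x$ and $\overline{C}\equiv 0$, and consider the extreme point with intervals $[0,\tfrac{1}{2})$ and $[\tfrac{1}{2},1)$; it satisfies ``$\overline{C}$ affine on each interval and $\overline{C}=C$ otherwise'' (the complement is $\{1\}$), yet its value is $-\tfrac{1}{8}$ while full pooling attains the optimum $0$. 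The correct reading---and the one under which your argument closes---requires $\overline{C}=C$ on the complement of $\bigcup_i(\underline{x}_i,\bar{x}_i)$, i.e.\ also at every endpoint; note that your ``only if'' direction already delivers this stronger conclusion, precisely because optimality forces $D=0$ on $\supp(-\mathrm{d}h)$, which contains shared endpoints through the atoms of $-\mathrm{d}h$ there. With that amendment (plus the routine observations that $D$ is Lipschitz, so the boundary term $D(x)m(x)\to 0$ as $x\to 0$ even if $m$ is unbounded, and that isolated points of the complement need the same atom-based argument rather than continuity), the proof is complete.
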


\section{Proofs for \autoref{section:buyer-optimal}} \label{buyer_optimal_proofs}
\subsection{Proof of \autoref{p:oia}} \label{pia}
Because the objective function of problem \eqref{problem:buyer} is linear, by Bauer's maximum principle (\cite{ab06}, Theorem 7.69, page 298), the maximum is attained at an extreme point \(\hat{P}\) of \(\Omega_w(\widetilde{P}^*)\). By \autoref{wmc}, there exists \(\bar{s} \in [0,1]\) such that \(\hat{P}\) is an extreme point of \(\Omega(\widetilde{P}^{*} \cdot \mathbf{1}_{[0, \bar{s}]})\) and vanishes on \([\bar{s}, 1]\). Furthermore, the optimality of \(\hat{P}\) requires \(\overline{g}(\bar{s}) = 0\); thus, setting \(\bar{s} = \sup\{s \in [0,1] : \overline{g}(s) \ge 0\} = \overline{S}\) suffices. 
Then \autoref{pmm} implies that
\[
    \hat{P}(s) = 
    \begin{cases}
    (1-s)^{n-1} & \text{ if } s \in [0,\overline{S}] \big\backslash \bigcup_{i \in \mathcal{I}} \left[\underline{s}_{i}, \overline{s}_{i}\right), \\
    \frac{\int_{\underline{s}_{i}}^{\overline{s}_{i}} (1-s)^{n-1} \mathrm{d}s}{\overline{s}_{i} - \underline{s}_{i}}& \text{ if } s \in \left[\underline{s}_{i}, \overline{s}_{i}\right), \\
    0 & \text{ if } s \in (\overline{S},1];
    \end{cases}
\]
where the collection \(\{[\underline{s}_i, \overline{s}_i) \subseteq [0,\overline{S}] : i \in \mathcal{I}\}\) is such that \(\overline{G}\) is affine on \([\underline{s}_i, \overline{s}_i)\) for each \(i \in \mathcal{I}\) and \(\overline{G} = G\) otherwise.

\subsection{Proof of \autoref{l:wd}}
First, consider a seller with \(q \ge \overline{b}_M\). Because the highest allowable bid is \(\overline{b}_M\), if she ever wins, her payoff is at most zero, which is no better than not bidding regardless of what other sellers do.

Next, consider a seller who has \(q \in (\underline{b}_i, \overline{b}_i)\) for some \(i=1, \ldots, M\); let \(b_{-}^{\min}\) denote the minimum bid among all other sellers. By bidding exactly \(q\), there are the following cases:
\begin{itemize}
    \item if \(b_{-}^{\min }<q\), the seller's payoff is 0;
    \item if \(b_{-}^{\min } \in\left(q, \bar{b}_i\right]\) or \(b_{-}^{\min } \in\left(\underline{b}_j, \bar{b}_j\right]\) for some \(j>i\), the seller's payoff is \(b_{-}^{\min }-q\);
    \item if \(b_{-}^{\text {min }}=\underline{b}_j\) for some \(j>i\), and there are \(k\) other sellers who bid \(b=b_{-}^{\min }\), then the seller's payoff is \((\underline{b}_j+k \overline{b}_{j-1})/(k+1)-q\).
\end{itemize}
If the seller bids \(b < q\) instead, when \(b \leq b_{-}^{\min }<q\) she gets a strictly negative payoff instead of 0, and otherwise she gets the same payoff as bidding \(q\). If the seller bids \(b > q\), when \(b_{-}^{\min } < b\) she gets zero, and when \(b_{-}^{\min } \ge b\) her payoff is otherwise identical to bidding \(b=q\), except for the case that \(b_{-}^{\min }=b=\underline{b}_j\) for some \(j>i\). In this case, if \(k\) other sellers bid \(\underline{b}_j\), then the seller's (expected) payoff is \((\underline{b}_j-q)/(k+1)\). But by bidding \(b=q\), fixing other sellers' bids, her payoff is \((\underline{b}_j+k \overline{b}_{j-1})/(k+1)-q\), and
\[\frac{\underline{b}_j+k \bar{b}_{j-1}}{k+1}-q-\frac{\underline{b}_j-q}{k+1}=\frac{k\left(\bar{b}_{j-1}-q\right)}{k+1} \ge 0,\]
where the inequality holds because \(i \le j-1\). Thus, for a seller with \(q \in (\underline{b}_i, \overline{b}_i)\) for some \(i=1, \ldots, M\), bidding \(q\) is a weakly dominant strategy. 

Now consider a seller with \(q \in [\overline{b}_{i-1}, \underline{b}_i]\) for some \(i=1, \ldots, M\) (recall that we set \(\overline{b}_0 := 0\)). By bidding \(\underline{b}_i\), there are the following cases:
\begin{itemize}
    \item if \(b_{-}^{\text {min }} \le \overline{b}_{i-1}\), the seller's payoff is 0;
    \item if \(b_{-}^{\text {min }}=\underline{b}_i\), the seller's payoff is \((\underline{b}_i-q)/(k+1)\) if \(k\) other sellers bid \(\underline{b}_i\);
    \item if \(b_{-}^{\min } \in \left(\underline{b}_j, \bar{b}_j\right]\) for some \(j \ge i\), the seller's payoff is \(b_{-}^{\min }-q\);
    \item if \(b_{-}^{\min }=\underline{b}_j\) for some \(j>i\), and there are \(k\) other sellers who bid \(b=b_{-}^{\min }\), then the seller's payoff is \((\underline{b}_j+k \overline{b}_{j-1})/(k+1)-q\).
\end{itemize}
If the seller bids \(b \le \overline{b}_{i-1}\) instead, she loses when \(b^{\min}_{-} < b\) and gets zero payoff; when \(b \le b_{-}^{\min } \le \bar{b}_{i-1}\), the seller's payoff is bounded above by zero. When \(b_{-}^{\min}=\underline{b}_i\), and \(k\) other sellers bid \(\underline{b}_i\), the seller gets \((\underline{b}_i+k \bar{b}_{i-1})/(k+1)-q\); but the payoff from bidding \(\underline{b}_i\) is \((\underline{b}_i-q)/(k+1)\), and 
\[\left(\frac{\underline{b}_i+k \bar{b}_{i-1}}{k+1}-q\right)-\frac{\underline{b}_i-q}{k+1}=\frac{k\left(\bar{b}_{i-1}-q\right)}{k+1} \le 0,\]
where the inequality holds because \(q \ge \bar{b}_{i-1}\). Otherwise, the seller gets the same payoff as bidding \(\underline{b}_i\). If the seller bids \(b > \underline{b}_{i}\), her payoff is 0 if \(b_{-}^{\min }<b\), and if \(b_{-}^{\min } \ge b\), the seller gets the same payoff as bidding \(\underline{b}_i\) unless \(b_{-}^{\min }=b=\underline{b}_j\) for some \(j>i\). In this case, if \(k\) other sellers bid \(\underline{b}_j\), the seller's payoff is \((\underline{b}_j - q)/(k+1)\); but in this case, by bidding \(\underline{b}_i\) the seller gets \((\underline{b}_j+k \overline{b}_{j-1})/(k+1)-q\), which is no lower. Therefore, for a seller with \(q \in [\overline{b}_{i-1}, \underline{b}_i]\) for some \(i=1, \ldots, M\), bidding \(\underline{b}_i\) is a weakly dominant strategy. 

Finally, notice that the argument above also establishes that bidding the quality is a weakly dominant strategy if the seller's quality is \(\underline{b}_i\) for some \(i=1, \ldots, M\). This completes the proof.

\subsection{Proof of \autoref{t:bra_bo}}
We first consider the case that \(\overline{S} = 0\), in which not buying from any seller is optimal. In this case, the optimal mechanism is a BRA with one (trivial) bid interval \(\{0\}\). We assume that \(\overline{S} > 0\) henceforth. 

The description of the BRA in the statement of the theorem and \autoref{l:wd} together indicate that 
\begin{itemize}
    \item If a seller's quality quantile \(s \in [0,\overline{S}) \big\backslash \bigcup_{i =1}^{L}\left[\underline{s}_{i}, \overline{s}_{i}\right)\) (that is, in a non-pooling interval), she bids her quality: \(b = q = F^{-1}(s)\).
    \item If a seller's quality quantile \(s \in \left[\underline{s}_{i}, \overline{s}_{i}\right)\) for some \(i = 1, \ldots, L\) (that is, in a pooling interval), she bids the lower bound of a bid interval.
    \item If a seller's quality quantile \(s \ge \overline{S}\), she does not bid. 
\end{itemize}
Therefore, for \(s \ge \overline{S}\), the interim allocation probability \(P_{BRA}(s)\) is zero. Moreover, by the definition of a BRA, a seller with quality quantile \(s \in [0,\overline{S}) \big\backslash \bigcup_{i =1}^{L}\left[\underline{s}_{i}, \overline{s}_{i}\right)\) wins if and only if all other sellers have quality quantiles above \(s\), which happens with probability \((1-s)^{n-1}\). Consequently, \(P_{BRA}(s) = (1-s)^{n-1}\) for all \(s \in [0,\overline{S}) \big\backslash \bigcup_{i =1}^{L}\left[\underline{s}_{i}, \overline{s}_{i}\right)\).

Now consider a seller with quality quantile \(s \in \left[\underline{s}_{i}, \overline{s}_{i}\right)\) for some \(i = 1, \ldots, L\). She could only win the auction when there is no seller with \(s < \underline{s}_{i}\); there are the following cases: 
\begin{itemize}
    \item All other \(n-1\) sellers have \(s \ge \overline{s}_{i}\). This case happens with probability \(\left(1-\overline{s}_i\right)^{n-1}\), and in this case, this seller wins with probability 1.

    \item One other seller has \(s \in \left[\underline{s}_{i}, \overline{s}_{i}\right)\), and \(n-2\) other sellers have \(s \ge \overline{s}_{i}\). This case happens with probability \(\binom{n-1}{1}\left(1-\overline{s}_i\right)^{n-2}\left(\overline{s}_i-\underline{s}_i\right)\), and in this case, this seller wins with probability 1/2.

    \item Two other sellers have \(s \in \left[\underline{s}_{i}, \overline{s}_{i}\right)\), and \(n-3\) other sellers have \(s \ge \overline{s}_{i}\). This case happens with probability \(\binom{n-1}{2}\left(1-\overline{s}_i\right)^{n-3}\left(\overline{s}_i-\underline{s}_i\right)^2\), and in this case, this seller wins with probability 1/3.

    \item ...

    \item \(n-2\) other sellers have \(s \in \left[\underline{s}_{i}, \overline{s}_{i}\right)\), and one other seller has \(s \ge \overline{s}_{i}\). This case happens with probability \(\binom{n-1}{n-2}\left(1-\overline{s}_i\right)\left(\overline{s}_i-\underline{s}_i\right)^{n-2}\), and in this case, this seller wins with probability \(1/(n-1)\).

    \item All other sellers have \(s \in \left[\underline{s}_{i}, \overline{s}_{i}\right)\). This case happens with probability \(\binom{n-1}{n-1}\left(\overline{s}_i-\underline{s}_i\right)^{n-1}\), and in this case, this seller wins with probability \(1/n\).
\end{itemize}
Therefore, the interim allocation probability for this seller with quality quantile \(s \in \left[\underline{s}_{i}, \overline{s}_{i}\right)\) is given by
\begin{align}
\sum_{k=0}^{n-1} \binom{n-1}{k} \frac{(1-\overline{s}_i)^{n-(k+1)}(\overline{s}_i-\underline{s}_i)^k}{k+1}. \label{eq:prob_total}
\end{align}
Using the identity \(\frac{1}{k+1}\binom{n-1}{k} = \frac{1}{n}\binom{n}{k+1}\), 
\eqref{eq:prob_total} can be rewritten as
\begin{align*}
    \, & \frac{1}{n} \sum_{k=0}^{n-1} \binom{n}{k+1} (1-\overline{s}_i)^{n-(k+1)}(\overline{s}_i-\underline{s}_i)^k = \frac{1}{n(\overline{s}_i-\underline{s}_i)} \sum_{j=1}^{n} \binom{n}{j} (1-\overline{s}_i)^{n-j}(\overline{s}_i-\underline{s}_i)^j \\
    = \, & \frac{\left[(1-\overline{s}_i) + (\overline{s}_i-\underline{s}_i)\right]^n - (1-\overline{s}_i)^n}{n(\overline{s}_i-\underline{s}_i)} = \frac{(1-\underline{s}_i)^n - (1-\overline{s}_i)^n}{n(\overline{s}_i-\underline{s}_i)} = \frac{\int_{\underline{s}_{i}}^{\overline{s}_{i}} (1-s)^{n-1} \, \mathrm{d}s}{\overline{s}_{i} - \underline{s}_{i}},
\end{align*}
where the first equality is obtained by letting \(j=k+1\), the second equality follows from the binomial theorem, and the final equality follows from the fundamental theorem of calculus. Thus, \(P_{BRA}(s) = \frac{\int_{\underline{s}_{i}}^{\overline{s}_{i}} (1-s)^{n-1} \, \mathrm{d}s}{\overline{s}_{i} - \underline{s}_{i}}\) on \([\underline{s}_i, \overline{s}_i)\) for every \(i=1, \ldots, L\). 

Summing up, we have established that the interim allocation probability induced by the BRA described in the statement of \autoref{t:bra_bo}, \(P_{BRA}\), equals the interim allocation probability \(\hat{P}\) defined by \eqref{eq:oip} in \autoref{p:oia} \(F\)-almost everywhere, which implies that the BRA described in the statement of the theorem is indeed a buyer-optimal procurement mechanism. 

\section{Formal Analysis for \autoref{sop}} \label{pso}
\subsection{Optimal interim allocation} \label{soi}
To solve problem \eqref{problem:weighted}, we use a Lagrangian approach. Set up the Lagrangian with multiplier \(\lambda\): 
\[
\mathcal{L}_{\gamma}(\widetilde{P};\lambda) \coloneqq \int_{0}^{1} \left[\widetilde{h}_{\gamma}(s) + \lambda \widetilde{g}(s)\right] \widetilde{P}(s) \, \mathrm{d}s,
\]
where \(\widetilde{h}_{\gamma}(s) := h_{\gamma}(F^{-1}(s))\) is the quantile version of the weighted virtual surplus. Define \(\phi_{\gamma}(q; \lambda) := h_{\gamma}(q) + \lambda g(q)\), and let \(\widetilde{\phi}_{\gamma}(s;\lambda) := \phi_{\gamma}(F^{-1}(s); \lambda)\). Evidently, \(\widetilde{\phi}_{\gamma}(s;\lambda) = \widetilde{h}_{\gamma}(s) + \lambda \widetilde{g}(s)\), which is the quantile virtual surplus of the Lagrangian. We solve the problem by maximizing \(\mathcal{L}_\gamma(\,\cdot\,; \lambda)\) over \(\widetilde{P} \in \Omega_w(\widetilde{P}^*)\), and then find an appropriate Lagrangian multiplier \(\lambda^*\) so that the complementary slackness condition holds. 

We iron \(\widetilde{\phi}_{\gamma}(s;\lambda)\) to make sure that the monotonicity constraint holds: let 
\[
H_{\gamma}(s) \coloneqq \int_{0}^{s} \widetilde{h}_{\gamma}(x) \, \mathrm{d}x \quad \text{and} \quad \Phi_{\gamma}(s;\lambda) \coloneqq \int_{0}^{s} \widetilde{\phi}_{\gamma}(x; \lambda) \, \mathrm{d}x,
\]
and let \(\overline{H}_{\gamma}\) and \(\overline{\Phi}_{\gamma}\) be the concave hulls of \(H_{\gamma}\) and \(\Phi_{\gamma}\), respectively. Now let 
\[
\overline{h}_{\gamma}(s) \coloneqq \overline{H}_{\gamma}'(s), \quad \text{and} \quad \overline{\phi}_{\gamma}(s;\lambda) \coloneqq \frac{\partial}{\partial s} \,\overline{\Phi}_{\gamma}(s; \lambda);
\]
\(\overline{h}_{\gamma}(s)\) and \(\overline{\phi}_{\gamma}(s; \lambda)\) are the ironed quantile weighted virtual surplus and the ironed quantile virtual surplus of the Lagrangian, respectively. By construction, both are decreasing in \(s\). 

Similar to the buyer-optimal case in \autoref{p:oia}, for all \(s\) such that \(\overline{\phi}_{\gamma}(s;\lambda) > 0\), the optimal interim allocation \(\hat{P}\) is flat whenever ironing is needed, and coincides with \(\widetilde{P}^*(s) = (1-s)^{n-1}\) otherwise. If there exists an interval \([S_+,S_0]\) on which \(\overline{\phi}_{\gamma}(s;\lambda) = 0\), we may need to find some \(\overline{P}\) satisfying
\[
0 \le \overline{P} \le \frac{\int_{S_+}^{S_0} (1-s)^{n-1} \, \mathrm{d}s}{S_0-S_+}:=A(S_+, S_0)
\]
and set \(\hat{P}(s) = \overline{P}\) on \([S_+,S_0]\) to satisfy complementary slackness; the second inequality in the above expression is needed because Border's condition requires \(\hat{P} \prec_w \widetilde{P}^*\).

\begin{proposition}\label{p:wia}
    If there exist \(\lambda^* \ge 0\), \(0 \le S_+ \le S_0 \le 1\), and a collection of disjoint intervals \(\{\left[\underline{s}_{i}, \bar{s}_{i}\right)\}_{i \in \mathcal{I}}\) with \(\left[\underline{s}_{i}, \bar{s}_{i}\right) \subseteq [0,S_0]\) for each \(i \in \mathcal{I}\), such that 
    \begin{itemize}
        \item[(i)] \(S_+ = \sup\left\{s \in [0,1]: \overline{\phi}_{\gamma}(s;\lambda^*) > 0\right\}\) and \(S_0 = \sup\left\{s \in [0,1]: \overline{\phi}_{\gamma}(s;\lambda^*) \ge 0\right\}\),
        \item[(ii)] \(\overline{\Phi}_{\gamma}(s;\lambda^*)\) is affine on \(\left[\underline{s}_{i}, \bar{s}_{i}\right)\) for each \(i \in \mathcal{I}\) and \([S_+,S_0)\), and
        \item[(iii)] \(\overline{\Phi}_{\gamma}(s;\lambda^*) = \Phi_{\gamma}(s;\lambda^*)\) on \([0,S_+] \big\backslash  \bigcup_{i \in \mathcal{I}}\left[\underline{s}_{i}, \overline{s}_{i}\right)\),
    \end{itemize}
    then the interim allocation
    \begin{equation} \label{eq:wip}
    \hat{P}(s) = \left\{
    \begin{array}{ll}
    (1-s)^{n-1} & \text{if } s \in [0, S_+] \big\backslash \bigcup_{i \in \mathcal{I}} \left[\underline{s}_{i}, \overline{s}_{i}\right) \\
    \frac{\int_{\underline{s}_{i}}^{\overline{s}_{i}} (1-s)^{n-1} \mathrm{d}s}{\overline{s}_{i} - \underline{s}_{i}} & \text{if } s \in \left[\underline{s}_{i}, \overline{s}_{i}\right) \\
    \overline{P} & \text{if } s \in (S_+,S_0] \\
    0 & \text{if } s \in (S_0,1]
    \end{array}\right.
    \end{equation}
    with \(\overline{P} \in \left(0, A(S_+, S_0)\right]\) such that \(\hat{P}\) satisfies the complementary slackness condition
    \[
    \lambda^* \int_{0}^{1} \widetilde{g}(s) \hat{P}(s) \, \mathrm{d}s = 0
    \]
    is optimal.
\end{proposition}

\begin{proof}
Suppose first that \(\overline{h}_{\gamma}(0) < 0\), meaning that it is undesirable to trade under incomplete information. In this case, the BRA with one (trivial) bid interval \(\{0\}\) is optimal. Now suppose instead that \(\overline{h}_{\gamma}(0) \ge 0\). Define
\(Z := \left\{s \in [0,1] : \overline{h}_{\gamma}(s) = 0 \right\}\); \(Z\) is the set of points on which the ironed quantile social surplus is zero. Because \(\overline{h}_{\gamma}(0) \ge 0\), if \(Z\) is empty, we must have \(\overline{h}_{\gamma}(1) > 0\). 

By Corollary 1 on page 219 and Theorem 2 on page 221 of \cite{luenberger1969}, \(\hat{P} \in \Omega_w(\widetilde{P}^*)\) solves problem \eqref{problem:weighted} if and only if there exists \(\lambda \ge 0\) such that \(\hat{P}\) maximizes \(\mathcal{L}\), and the complementary slackness condition
\[
\lambda \int_{0}^{1} \widetilde{g}(s) \hat{P}(s) \, \mathrm{d}s = 0
\]
holds. Consequently, an optimal \(\hat{P}\) can be found using the following algorithm: 

\paragraph{Step 1.} Check whether there exists \(\hat{s} \in [0,1]\), either \(\hat{s} \in Z\), or \(Z = \varnothing\) and \(\hat{s} = 1\) such that 
\[
\hat{P}(s) = \left\{
\begin{array}{ll}
    (1-s)^{n-1} & \text{if } s \in [0, \hat{s}] \big\backslash \bigcup_{i \in I} \left[\underline{x}_{i}, \overline{x}_{i}\right) \\
    \frac{\int_{\underline{x}_{i}}^{\overline{x}_{i}} (1-s)^{n-1} \mathrm{d}s}{\overline{x}_{i} - \underline{x}_{i}} & \text{if } s \in \left[\underline{x}_{i}, \overline{x}_{i}\right) \\
    0 & \text{if } s \in (\hat{s},1]
\end{array}\right.
\]
satisfies
\[
\int_{0}^{1} \widetilde{g}(s) \hat{P}(s) \, \mathrm{d}s \ge 0,
\]
where \(\left\{\left[\underline{x}_{i}, \overline{x}_{i}\right)\right\}_{i \in I}\) is the collection of intervals on which \(\overline{H}_{\gamma}\) is affine.
If so, we can set \(\lambda = 0\), which implies that the ironed quantile virtual surplus of the Lagrangian coincides with the ironed quantile weighted virtual surplus: \(\overline{\phi}_{\gamma}(s;0) = \overline{h}_{\gamma}(s)\). Setting \(S_+ = S_0 = \hat{s}\), \(\hat{P}\) solves problem \eqref{problem:weighted}. If not, go to \textbf{Step 2}. 

\paragraph{Step 2.} We must have \(\lambda > 0\), otherwise we could have found an \(\hat{s}\) in \textbf{Step 1}. Now we search for \(\lambda > 0\) such that there exists a unique \(\tilde{s}\) such that \(\overline{\phi}_{\gamma}(\tilde{s}; \lambda) = 0\), and the ``induced interim allocation'' 
\[
\hat{P}(s) = \left\{
\begin{array}{ll}
    (1-s)^{n-1} & \text{if } s \in [0, \tilde{s}] \big\backslash \bigcup_{i \in J} [\underline{y}_{i}, \overline{y}_{i}) \\
    \frac{\int_{\underline{y}_{i}}^{\overline{y}_{i}} (1-s)^{n-1} \mathrm{d}s}{\overline{y}_{i} - \underline{y}_{i}} & \text{if } s \in [\underline{y}_{i}, \overline{y}_{i}) \\
    0 & \text{if } s \in (\tilde{s},1]
\end{array}\right.
\]
satisfies
\[
\int_{0}^{1} \widetilde{g}(s) \hat{P}(s) \, \mathrm{d}s = 0,
\]
where \(\{[\underline{y}_{i}, \overline{y}_{i})\}_{i \in J}\) is the collection of intervals on which \(\overline{\Phi}_{\gamma}\) is affine. If we can find such a \((\lambda, \tilde{s})\) pair, \(\hat{P}\) solves problem \eqref{problem:weighted}; if not, go to \textbf{Step 3}. 
\paragraph{Step 3.} There must exist an interval \([S_+,S_0] \subseteq [0,1]\) with \(S_+ < S_0\) such that \(\overline{\phi}_\gamma(\,\cdot\,; \lambda) = 0\) on \([S_+,S_0]\), and there exists \(\overline{P}\) with
\[
0 \le \overline{P} \le \frac{\int_{S_+}^{S_0} (1-s)^{n-1} \, \mathrm{d}s}{S_0 - S_+}
\]
such that
\begin{equation*}
\hat{P}(s) = \left\{
\begin{array}{ll}
    (1-s)^{n-1} & \text{if } s \in [0, S_+] \big\backslash \bigcup_{i \in J} [\underline{y}_{i}, \overline{y}_{i}) \\
    \frac{\int_{\underline{y}_{i}}^{\overline{y}_{i}} (1-s)^{n-1} \mathrm{d}s}{\overline{y}_{i} - \underline{y}_{i}} & \text{if } s \in [\underline{y}_{i}, \overline{y}_{i}) \\
    \overline{P} & \text{if } s \in (S_+,S_0] \\
    0 & \text{if } s \in (S_0,1]
\end{array}\right.
\end{equation*}
satisfying
\[
\int_{0}^{1} \widetilde{g}(s) \hat{P}(s) \, \mathrm{d}s = 0
\]
solves problem \eqref{problem:weighted}. 
\end{proof}

\subsection{Implementation: Augmenting BRA if necessary} \label{sim}
If the optimal interim allocation \(\hat{P}\) defined in \eqref{eq:wip} satisfies \(S_+ = S_0\), it takes exactly the same form as \eqref{eq:oip} in \autoref{p:oia}. Therefore, its implementation is described by \autoref{t:bra_bo}, by replacing \(\overline{S}\) by \(S_0\). If, instead, \(S_+<S_0\), then to implement the optimal interim allocation we need to slightly modify the BRA. 
The details of this modification are described below.

\begin{definition} \label{def:abra}
    An \textbf{augmented bid-restricted auction (aBRA)} is a sealed-bid auction with 
    \begin{itemize}
        \item \(M \in \mathbb{N}\) \textbf{standard bid intervals} \(\{[\underline{b}_i, \overline{b}_i]\}_{i=1}^{M}\), where \(\underline{b}_1 \ge 0\), \(\overline{b}_M < 1\), and for all \(i=1,\ldots, M\), \(\underline{b}_i \le \overline{b}_i\) and \(\overline{b}_i < \underline{b}_{i+1}\),
        \item an \textbf{extra bid} \(B\) with \(\overline{b}_M < B \le 1\), and 
        \item a \textbf{qualification rate} \(\zeta \in (0,1]\),
    \end{itemize}
    with the following rules:
    \begin{itemize}
        \item Any seller who wishes to participate must submit a bid \(b \in \cup_{i=1}^M[\underline{b}_i, \overline{b}_i] \cup \{B\}\).
        \item With probability \(1-\zeta\), all sellers who bid \(B\) are disqualified, meaning that their bids do not count.
        
    \item The seller whose (valid) bid is the lowest wins the auction; in the event of a tie, the winning seller is chosen with equal probability among the sellers submitting the lowest (valid) bid.
    
    \item If the winning bid is the only bid in its standard bid interval \([\underline{b}_i, \overline{b}_i]\) for some \(i = 1, \ldots, M\), and
    \begin{itemize}
        \item if, furthermore, the second-lowest (valid) bid equals \(\underline{b}_j\) for some \(j > i\) and a total of \(k\) sellers bid \(\underline{b}_j\), then the winning seller receives a payment of
        \((\underline{b}_j + k\overline{b}_{j-1})/(k+1)\);
        
        \item if, furthermore, the second-lowest (valid) bid equals \(B\) and a total of \(k\) sellers bid \(B\),\footnote{If there is no second-lowest valid bid (e.g., only one seller submits a valid bid), we define the second-lowest bid to be \(B\) if the winning bid is \(B\), and \(\overline{b}_M\) if the winning bid is in a standard bid interval. \label{fn:reserve}} 
        then the winning seller receives a payment of 
        \(\zeta(B + k\overline{b}_{M})/(k+1)\).
    \end{itemize}
    
    \item Otherwise, the winning seller receives a payment equal to the second-lowest (valid) bid.
    \end{itemize}
\end{definition}

We first establish the aBRA analogue of \autoref{l:wd}. 

\begin{lemma} \label{l:wd_abra}
    In an aBRA, it is a weakly dominant strategy for any seller to not bid if her quality exceeds \(B\), bid her quality when \(q \in [\underline{b}_i, \overline{b}_i)\) for some \(i=1, \ldots, M\), bid \(\underline{b}_{i}\) if \(q \in [\overline{b}_{i-1}, \underline{b}_{i})\) for some \(i=1, \ldots, M\), and bid \(B\) if \(q \in [\overline{b}_M, B)\).
\end{lemma}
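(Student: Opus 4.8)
The proof will closely mirror the argument for \autoref{l:wd}, simply adding bookkeeping for the extra bid \(B\) and the qualification rate \(\zeta\). The key observation is that an aBRA retains the ``second-price-like'' structure: a seller's own bid never affects the payment she receives directly, only the \emph{identity} of the winner and (for bids of \(B\)) whether she is qualified. So for each of the four candidate quality ranges I would fix the other sellers' bids, compute the seller's expected payoff from the prescribed bid, and check that no deviation is profitable.

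\textbf{Step 1 (high quality, \(q > B\)).} Exactly as in \autoref{l:wd}: the highest admissible bid is \(B\) (submitted possibly after disqualification at rate \(1-\zeta\)), and winning at any admissible price \(\le B < q\) yields a nonpositive payoff, so not bidding is weakly dominant.

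\textbf{Step 2 (standard bid interval, \(q \in [\underline b_i, \overline b_i)\)).} Reuse verbatim the case analysis in the proof of \autoref{l:wd} for bidding \(q\): against any minimum competing bid \(b_-^{\min}\) coming from a \emph{standard} interval the payoff is \(b_-^{\min} - q\) (or, at a tie at some lower bound \(\underline b_j\), the reduced payment \((\underline b_j + k\overline b_{j-1})/(k+1)\), which dominates the \((\underline b_j - q)/(k+1)\) a seller would get by overbidding to \(\underline b_j\), since \(\overline b_{j-1} \ge \overline b_i > q\)). The only genuinely new subcase is when \(b_-^{\min} = B\) with \(k\) other sellers also bidding \(B\): then after disqualification the realized second-lowest bid among active bids is \(B\) with probability \(\zeta^{\,k}\cdot(\text{something})\)… here it is cleaner to argue that whatever the realized second-lowest bid \(p\) is, we have \(p \ge \zeta(B + k\overline b_M)/(k+1)\ge\) the reduced payment, and in any event \(p > \overline b_M > q\), so the seller's payoff \(p - q > 0\); bidding \(q\) wins whenever some active competitor's bid exceeds \(q\), and any deviation either forfeits a profitable win (overbidding) or turns a zero into a loss (underbidding). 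So \(b = q\) is weakly dominant.

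\textbf{Step 3 (gap, \(q \in [\overline b_{i-1}, \underline b_i)\)).} Identical to the corresponding case in \autoref{l:wd}, including the key inequality \(k(\overline b_{i-1} - q)/(k+1) \le 0\) that rules out underbidding into interval \(i-1\), and the payment-reduction comparison ruling out overbidding; the presence of \(B\) higher up only adds subcases with \(b_-^{\min}\) equal to \(B\) or \(\underline b_j\), \(j > i\), handled exactly as in Step 2.

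\textbf{Step 4 (the new range, \(q \in [\overline b_M, B)\)).} This is the main obstacle and the only place the qualification rate really does work. Suppose the seller bids \(B\). She is qualified with probability \(\zeta\); conditional on qualification, she wins only if every other seller's \emph{active} bid exceeds \(B\) — i.e.\ every other seller who is not bidding \(B\)-and-disqualified has a bid \(> B\), which (by Steps 2--3 applied to the opponents) means every opponent either has quality \(> B\) or bid \(B\) and got disqualified; ties among the \(B\)-bidders are broken uniformly. In the event she wins, the second-lowest active bid is either \(> B\) (payoff \(> B - q \ge 0\)) or equals \(B\) with \(k\) co-bidders surviving, giving the reduced payment \(\zeta(B + k\overline b_M)/(k+1)\) — wait, more carefully, the payment rule conditions on the \emph{submitted} second-lowest bid being \(B\), so the payoff there is \(\zeta(B + k\overline b_M)/(k+1) - q\), and since \(q < B\) and \(\overline b_M \le q\)… the sign needs the computation \(\zeta(B + k\overline b_M)/(k+1) \ge \zeta \overline b_M\), compared against the alternative of deviating down to a bid \(b \le \overline b_M\) which guarantees qualification but caps the price at \(\overline b_M < q\) on the relevant winning events, i.e.\ yields payoff \(\le 0\). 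The crucial inequality to display is that the gain from guaranteed qualification (bidding down) is outweighed by the loss in price: bidding \(b \le \overline b_M\) wins with a payment \(\le \overline b_M \le q\) whenever it wins at all against bids from \(\cup[\underline b_i,\overline b_i]\), so that deviation's payoff is nonpositive, while bidding \(B\) secures strictly positive payoff on the event that all active competitors bid above \(B\); any intermediate deviation \(b \in (\overline b_M, B)\) is infeasible (not in the bid set), and bidding \(b = B\) is the prescription. Hence \(b = B\) is weakly dominant.

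\textbf{Step 5 (boundary qualities).} As in \autoref{l:wd}, note the argument also covers \(q = \underline b_i\) (bidding \(\underline b_i\), equivalently one's quality, is weakly dominant) and \(q = \overline b_M\) (bidding \(B\) and bidding \(\overline b_M\) are payoff-equivalent and both weakly dominant), so the prescribed strategy is well-defined up to ties. This completes the proof.

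\textbf{Where the difficulty lies.} Steps 1--3 are essentially a transcription of \autoref{l:wd}'s proof with extra but routine subcases. The real content is Step 4: one must verify that the \(\zeta\)-adjusted reduced payment \(\zeta(B + k\overline b_M)/(k+1)\) is calibrated so that a seller with \(q \in [\overline b_M, B)\) is indifferent (weakly) between bidding \(B\) and the best feasible underbid, just as the unadjusted reduced payment in \autoref{l:wd} was calibrated for gap qualities — the factor \(\zeta\) exactly compensates for the fact that bidding \(B\) only qualifies with probability \(\zeta\), so the comparison reduces, after dividing through, to the same inequality \(k(\overline b_M - q)/(k+1) \le 0\) that appears in the gap case. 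I would structure the write-up to make that parallel explicit rather than redo the algebra from scratch.
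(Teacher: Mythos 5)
Your overall plan---rerun the case analysis of \autoref{l:wd} and isolate the new comparisons created by the extra bid \(B\) and the qualification rate \(\zeta\)---is the same as the paper's, but the two genuinely new comparisons are the ones your write-up gets wrong or skips. In your Step 4, the assertion that a deviation to some \(b \le \overline{b}_M\) ``wins with a payment \(\le \overline{b}_M \le q\) whenever it wins at all,'' so that the deviation payoff is nonpositive, is false: if the deviator is the only bidder in her interval and the second-lowest bid is \(B\) with \(k\) other sellers at \(B\), she receives the reduced payment \(\zeta(B + k\overline{b}_M)/(k+1)\), which exceeds \(\overline{b}_M\) whenever \(\zeta\) is close to one (e.g.\ \(k=1\) gives roughly \((B+\overline{b}_M)/2 > \overline{b}_M\)), so this deviation can be strictly profitable in absolute terms. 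What saves the prescribed strategy is not that the deviation payoff is nonpositive but that it is weakly below the payoff from bidding \(B\), namely \(\zeta(B-q)/(k+1)\); the paper's proof makes exactly this comparison and shows the difference equals \(\frac{\zeta k}{k+1}\left(q - \overline{b}_M\right) \ge 0\) because \(q \ge \overline{b}_M\). You gesture at this inequality only in your closing paragraph, and it contradicts the claim in the body of Step 4, so the step as written does not go through.

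Similarly, for a seller with \(q \in (\underline{b}_i, \overline{b}_i)\) the only new verification beyond \autoref{l:wd} is the comparison between bidding \(q\) and bidding \(B\); the paper does this explicitly, showing that bidding \(B\) both risks disqualification and, upon a tie at \(B\), yields only \(\zeta(B-q)/(k+1)\), which is dominated by what bidding \(q\) delivers in the same event. Your Step 2 replaces this with the slogan ``overbidding forfeits a profitable win'' and with the claim that when \(b_{-}^{\min} = B\) the payoff is \(p - q > 0\) for the realized second-lowest bid \(p\); the latter ignores that the relevant payment is the \(\zeta\)-scaled reduced payment, which can fall below \(q\) when \(\zeta\) is small, so the accounting must be done in expectation over the disqualification lottery rather than claimed pointwise. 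Relatedly, your \(\zeta^{k}\) aside shows you have not fixed whether disqualification is independent across \(B\)-bidders or a single common draw; the paper computes all expected payoffs under the common-draw reading (with probability \(1-\zeta\) all \(B\)-bidders are disqualified), and the algebra you need depends on that choice. Pinning this down and writing out the two displayed comparisons is what is missing.
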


\begin{proof}
    Because the highest allowable bid is \(B\), if a seller with \(q \ge B\) ever wins, her payoff is at most zero, which is no better than not bidding regardless of what other sellers do. 

    Now consider a seller with \(q \in [\overline{b}_M, B)\). By bidding \(B\), there are two cases (recalling that \(b_{-}^{\min}\) denotes the minimum bid among all other sellers): 
    \begin{itemize}
        \item if \(b^{\min}_{-} \le \overline{b}_{M}\), the seller's expected payoff is 0;
        \item if \(b^{\min}_{-} = B\), and \(k\) other sellers bid \(B\), the seller's expected payoff is \(\zeta(B-q)/(k+1)\). 
    \end{itemize}
    If the seller bids \(b \le \overline{b}_M\) instead, she is qualified with probability 1. If \(b^{\min}_{-} \le \overline{b}_{M}\), the seller's expected payoff is at most zero. Now suppose \(b^{\min}_{-} = B\), then there are two possibilities. With probability \(\zeta\), the opponents who bid \(B\) are qualified; the seller wins, and her payoff is \(\zeta(B+k\overline{b}_M)/(k+1) - q\). With probability \(1-\zeta\), the seller submits the only valid bid. By \Cref{fn:reserve}, the second-lowest bid is taken as \(\overline{b}_M\). Thus, her payoff is \(\overline{b}_M - q\). Consequently, the seller's expected payoff from bidding \(b \le \overline{b}_M\) is given by
    \[\zeta \left( \frac{\zeta(B+k\overline{b}_M)}{k+1} - q \right) + (1-\zeta)(\overline{b}_M - q). \]
    The gain from bidding \(B\) relative to bidding \(b \le \overline{b}_M\) is:
    \begin{align*}
    & \frac{\zeta(B-q)}{k+1} - \left[ \frac{\zeta^2(B+k\overline{b}_M)}{k+1} - \zeta q + (1-\zeta)(\overline{b}_M - q) \right] \\
    &= \frac{\zeta B (1-\zeta) - \zeta^2 k \overline{b}_M + \zeta k q}{k+1} + (1-\zeta)(q - \overline{b}_M).
\end{align*}
Since \(q \ge \overline{b}_M\), the term \((1-\zeta)(q-\overline{b}_M)\) is non-negative. It suffices to show the fraction is also non-negative. Substituting \(q \ge \overline{b}_M\), we get
\begin{align*}
    \frac{\zeta B (1-\zeta) - \zeta^2 k \overline{b}_M + \zeta k q}{k+1} \ge \frac{\zeta B (1-\zeta) - \zeta^2 k \overline{b}_M + \zeta k \overline{b}_M}{k+1} = \frac{\zeta (1-\zeta) (B + k\overline{b}_M)}{k+1} \ge 0.
\end{align*}
Therefore, bidding \(B\) is a weakly dominant strategy for sellers with \(q \in [\bar{b}_M, B)\). 

    To show that bidding the quality \(q\) is a weakly dominant strategy for a seller with quality \(q \in (\underline{b}_i, \bar{b}_i)\) for some \(i = 1, \ldots, M\), it suffices to compare bidding \(q\) with bidding \(B\), as all other possible bids are covered by the proof of \autoref{l:wd}. If she bids her quality \(q\), as described in the proof of \autoref{l:wd}, her payoff is at least zero when \(b^{\min}_{-} \le \bar{b}_M\). When \(b^{\min}_{-} = B\) and \(k\) other sellers bid \(B\), with probability \(1-\zeta\) all other sellers are disqualified (i.e., no other sellers submit a valid bid), and therefore this seller gets \(b^{\min}_{-} - q = B - q\); with probability \(\zeta\), this seller's payoff is given by \(\zeta\left(\frac{B+k\bar{b}_M}{k+1} - q\right)\). If she bids \(B\) instead, with probability \(1-\zeta\) she is disqualified and gets zero payoff, and with probability \(\zeta\) her payoff is \(\zeta(B-q)/(k+1)\), which is strictly worse than bidding \(q\) either way. This shows that bidding the quality \(q\) is a weakly dominant strategy for a seller with quality \(q \in (\underline{b}_i, \bar{b}_i)\) for some \(i = 1, \ldots, M\). An analogous argument shows that for a seller with quality \(q \in [\bar{b}_{i-1}, \underline{b}_i]\) for some \(i = 1, \ldots, M\), a weakly dominant strategy is bidding \(\underline{b}_i\). 
\end{proof}

\autoref{t:bra_wo} establishes the optimality of aBRA.

\begin{theorem} \label{t:bra_wo}
    Let \(\lambda^*\) be the Lagrangian multiplier associated with the optimal interim allocation identified in \autoref{p:wia}. If \(\phi_{\gamma}(q;\lambda^*) = h_{\gamma}(q) + \lambda^* g(q)\) is structured, then
    \begin{enumerate}
        \item[(a)] if \(S_+ = S_0\), the BRA described in \autoref{t:bra_bo} with \(\overline{S}\) replaced by \(S_0\) is optimal.
        \item[(b)] if \(S_+ < S_0\), an aBRA with standard bid intervals as described in \autoref{t:bra_bo} with \(\overline{S}\) replaced by \(S_+\), extra bid \(F^{-1}(S_0)\), and qualification rate \(\zeta = \overline{P}/A(S_+, S_0)\) is optimal. 
    \end{enumerate}
\end{theorem}

\begin{proof}
Part (a) follows from \autoref{t:bra_bo}, and it suffices to prove Part (b). Using \autoref{l:wd_abra}, the proof of \autoref{t:bra_bo} indicates that the interim allocation probability induced by the aBRA, \(P_{aBRA}\), equals \(\hat{P}\) defined by \eqref{eq:wip} in \autoref{p:wia} \(F\)-almost everywhere on \([0,1] \setminus [S_+, S_0)\). For a seller with quality quantile \(s \in [S_+, S_0)\), \autoref{l:wd_abra} implies that it is a weakly dominant strategy for her to bid \(B\). With probability \(1-\zeta\), this seller is disqualified, which means that she wins with probability \(0\). With probability \(\zeta\), this seller is qualified; following the same steps as in the proof of \autoref{t:bra_bo}, we see that she wins the auction with probability \(\int_{S_+}^{S_0} (1-s)^{n-1} \, \mathrm{d}s/(S_0 - S_+)\). Thus, the interim allocation probability for this seller is
\[P_{aBRA}(s) = \zeta \cdot \frac{\int_{S_+}^{S_0} (1-s)^{n-1} \, \mathrm{d}s}{S_0 - S_+} + (1-\zeta) \cdot 0 = \overline{P},\]
where the second equality follows from the definition of the qualification rate \(\zeta\). Therefore, \(P_{aBRA} = \hat{P}\) \(F\)-almost everywhere, which implies that the aBRA described in Part (b) of the theorem is indeed optimal. \end{proof}

\autoref{cor:spa} and \autoref{cor:single_peaked} identify environments in which some simple trading mechanisms are optimal. 

\begin{corollary} \label{cor:spa}
    Suppose \(v(q) - q\) is strictly decreasing, \(F\) is twice continuously differentiable, and both \(F\) and \(1-F\) are log-concave. Then a second-price auction with reserve price \(F^{-1}(S_0)\) is optimal. 
\end{corollary}

\begin{proof}
    We first claim that \(-F/f\) is decreasing. To see this, differentiate to obtain \((-F/f)' = -1-(-Ff'/f^2)\). When \(f'(q) \le 0\), because \(1-F\) is log-concave, \(-Ff'/f^2 \ge (1-F)f'/f^2 \ge -1\); if instead \(f'(q) > 0\), \(-Ff'/f^2 \ge -1\) since \(F\) is log-concave. Thus, \((-F/f)' \le 0\). Then since \(v(q) - q\) is strictly decreasing, for any \(\gamma \in [0,1]\), both \(h_{\gamma}(q)\) and \(g(q)\) are strictly decreasing, and so are \(\widetilde{g}\) and \(\widetilde{h}_{\gamma}\). Therefore, the virtual surplus of the Lagrangian, \(\widetilde{\phi}_{\gamma}(s;\lambda) = \widetilde{h}_{\gamma}(s) + \lambda \widetilde{g}(s)\) must be strictly decreasing for any \(\lambda \ge 0\) and \(\gamma \in [0,1]\). Consequently, for any \(\lambda \ge 0\), there must exist a unique \(S_0 \in [0,1]\) such that \(\widetilde{\phi}(S_0;\lambda) = 0\). The result thus follows.
\end{proof}

\autoref{cor:spa} specifies that if the buyer values marginal quality uniformly less than the potential sellers do, under standard distributional assumptions in the mechanism design literature, a second-price auction (with a reserve price if needed) maximizes any weighted average of the buyer's expected payoff and the social surplus. 

\begin{corollary} \label{cor:single_peaked}
	If \(v(q)\) is concave and \(F/f\) is convex in \(q\), then \(\phi_{\gamma}(q;\lambda^*)\) is single-peaked for all \(\lambda^* \ge 0\). Consequently, 
		\begin{itemize}
			\item if there do not exist \(0 \le S_+ < S_0 \le 1\) such that \(\phi_{\gamma}(\cdot;\lambda^*) = 0\) on \([S_+,S_0]\), the BRA with one bid interval is optimal;
			\item otherwise, an aBRA with one standard bid interval \([\underline{b}, F^{-1}(S_+)]\) and extra bid \(F^{-1}(S_0)\) is optimal. 
		\end{itemize}
\end{corollary}

\begin{proof}
If \(\overline{\phi}_{\gamma}(0; \lambda^*) < 0\), then it is optimal not to buy from any potential seller. This is equivalent to a (trivial) BRA with the lone bid interval \(\{0\}\).

Now suppose \(\overline{\phi}_{\gamma}(0; \lambda^*) = m \ge 0\). Because both \(v(q)-q\) and \(-F(q)/f(q)\) are concave under our assumptions, so is \(\phi_{\gamma}(q;\lambda^*)\) since it is a non-negative linear combination of \(v(q)-q\) and \(-F(q)/f(q)\). Thus, \(\phi_{\gamma}(q;\lambda)\) is single-peaked. Consequently, there exists \(c \in [0,1]\) such that \(\overline{\phi}_{\gamma}(\, \cdot \,; \lambda^*) = m\) on \([0,c]\) and decreasing on \([c,1]\). Then by \autoref{t:bra_wo}, if \(S_+ = S_0\), a BRA with one bid interval \([\underline{b}, F^{-1}(S_0)]\) is optimal; otherwise, an aBRA with one standard bid interval \([\underline{b}, F^{-1}(S_+)]\) and extra bid \(F^{-1}(S_0)\) is optimal. 
\end{proof}

Assuming that \(v\) is concave implies that the buyer's marginal valuation of quality is decreasing. As pointed out by \cite{lpv}, the convexity of \(F/f\) is satisfied by many familiar distributions with bounded supports, including power distributions,\footnote{The CDF of a power distribution takes the form of \(F(x) = x^\alpha\), where \(\alpha > 0\). When \(\alpha = 1\), we get the uniform distribution.} (truncated) Pareto distributions, and (truncated) exponential distributions. In fact, it is also satisfied by Beta distributions with both parameters greater than or equal to 1.

\section{An extension of the main model} \label{dnp}
In the main model, the buyer's valuation is assumed to be a deterministic function of the sellers' quality. Moreover, a seller's cost, or her reservation value, is identical to her quality. These assumptions are made to simplify notation and emphasize the quality concerns. In many relevant applications, however, it might be more natural either to assume that the buyer's valuation is a random variable or to treat the sellers' costs as primitives, or both. In what follows, we show that, even in these settings, focusing on the main model is without loss.\footnote{For brevity, we only discuss the buyer-optimal problem here; extending the analysis to allow for any weighted average is straightforward.}

For concreteness, consider a buyer who would like to contract with one of several potential suppliers to develop a new project, say a new production line. The cost of supplier \(s\), \(c_s \in [0,1]\), is her private information; the costs are independently and identically distributed according to a continuous density function \(f_C(\cdot)\). The project's value is not perfectly revealed to the buyer until the end of the development phase at the earliest, which is long after penning the contract. Consequently, at the time of contracting the buyer's valuation is a random variable \(\Xi\). We assume that the realization of \(\Xi\), \(\xi \in [\underline{\xi},\overline{\xi}]\), is not contractible. The buyer believes that \(\Xi\) and \(C\) are correlated, and that the conditional distribution of \(\Xi\) is \(f_{\Xi \,| C}(\cdot|c)\). One possible assumption can be that \(\Xi\) and \(C\) are positively affiliated, or equivalently \(\text{MTP}_2\) (\citeauthor{kr80a}, \citeyear{kr80a}; see also \citeauthor{mw82}, \citeyear{mw82}).

Let \(\boldsymbol{c} \coloneqq (c_1, \ldots, c_n)\). Given a direct mechanism \(\{p_s(\boldsymbol{c}), t_s(\boldsymbol{c})\}_{s=1}^n\), where for each cost profile \(\boldsymbol{c}\), \(p_s(\boldsymbol{c})\) specifies the probability that the buyer contracts with supplier \(s\), and \(t_s(\boldsymbol{c})\) is the transfer that the buyer pays to supplier \(s\), the buyer's expected payoff can be written as
\begin{align*}
	\pi_{b} & = \int_{[0,1]^{n}} \left[\sum_{s=1}^{n}\int_{\underline{\xi}}^{\overline{\xi}} (\xi \,p_{s}(\boldsymbol{c})-t_{s}(\boldsymbol{c})) f_{\Xi \,| C}(\xi|c_s) \, \mathrm{d}\xi \right] f^n(\boldsymbol{c}) \, \mathrm{d}\boldsymbol{c} \\
		& = \int_{[0,1]^{n}} \sum_{s=1}^{n} \left[\left(\int_{\underline{\xi}}^{\overline{\xi}} \xi f_{\Xi \,| C}(\xi|c_s) \, \mathrm{d}\xi\right) p_{s}(\boldsymbol{c})-t_{s}(\boldsymbol{c}) \right] f^n(\boldsymbol{c}) \, \mathrm{d}\boldsymbol{c} \\
		& = \int_{[0,1]^{n}} \sum_{s=1}^{n}\left(\mathbb{E}[\Xi \, | \, C=c_s]\, p_{s}(\boldsymbol{c})-t_{s}(\boldsymbol{c}) \right) f^n(\boldsymbol{c}) \, \mathrm{d}\boldsymbol{c},
\end{align*}
where \(f^n(\boldsymbol{c}) := \prod_{s=1}^n f_C(c_s)\). If we define \(v(c_s):= \mathbb{E}[\Xi \, | \, C=c_s]\), we see from \eqref{bbp} that the problem here is identical to the procurement problem we study in the main text, and the curvature of \(v(c_s)\) is governed by the conditional distribution. For example, if \(\Xi\) and \(C\) are positively affiliated, \(v(\cdot)\) is increasing.

\begin{example} \label{example:pareto}
	A manufacturer would like to procure a machine for production. For simplicity, suppose that her valuation is identical to the durability of the machine. She believes that the two potential sellers' costs are identically, independently, and uniformly distributed on \([0,1]\). Conditional on the cost realization \(c\), her valuation \(\Xi\) is distributed according to a Pareto distribution with scale 0.5 and shape \(2.2-c\).\footnote{The scale parameter can be interpreted as the length of the machine's warranty.} Consequently, 
	\[
	v(c) = \mathbb{E}[\Xi \, | \, C=c] = \frac{1.1-0.5c}{1.2-c},
	\]
	and \(g(c) = v(c) - 2c\). 
    By \autoref{p:oia}, the optimal interim allocation is given by
    \[
    \hat{P}(c) = 
    \left\{\begin{array}{ll}
    	1-c & c < 0.48, \\
    	0.26 & c \ge 0.48.
    \end{array}\right.
    \]
    By \autoref{cor:bra_sc}, a BRA with two bid intervals \([0,0.48]\) and \(\{1\}\) is optimal. \hfill \(\lozenge\)
\end{example}
\end{appendices}

\end{document}